\definecolor{equationcolor}{RGB}{222,94,100}
\definecolor{boxcolor}{RGB}{215,215,253}
\definecolor{block2c}{RGB}{215,185,212}
\definecolor{block3c}{RGB}{216,155,172}
\definecolor{block4c}{RGB}{216,125,132}
\definecolor{frenchbeige}{rgb}{0.65, 0.48, 0.36}
\def\blfootnote{\gdef\@thefnmark{}\@footnotetext}
\newcommand{\AB}{\ms{AB}}
\newcommand{\ketbraa}[2]{\ensuremath{\left|#1\right\rangle\!\!\left\langle#2\right|}}
\renewcommand{\v}[1]{\ensuremath{\boldsymbol #1}}
\newcommand{\ms}[1]{\textsf{#1}}
\newcommand{\1}{\mathbbm{1}}
\newcommand{\iden}{\mathbbm{1}}
\newcommand{\E}[1]{\mathcal{E}}
\def\E{ {\cal E} }
\def\S{ {\cal S} }
\def\T{ {\cal T} }
\def\Bd{{\operatorname{Bd}}}
\newtheorem{thm}{Theorem}
\newtheorem{lem}[thm]{Lemma}
\newtheorem{prop}[thm]{Proposition}
\newtheorem{defn}{Definition}
\begin{document}

\title{Entanglement generation from athermality}

\author{A. de Oliveira Junior$^*$}
	\affiliation{Center for Macroscopic Quantum States bigQ, Department of Physics,
Technical University of Denmark, Fysikvej 307, 2800 Kgs. Lyngby, Denmark}

\author{Jeongrak Son$^*$}
	\affiliation{School of Physical and Mathematical Sciences, Nanyang Technological University,
637371, Singapore}

\author{Jakub Czartowski}
	\affiliation{Doctoral School of Exact and Natural Sciences, Jagiellonian University, 30-348 Kraków, Poland}
	\affiliation{Faculty of Physics, Astronomy and Applied Computer Science, Jagiellonian University, 30-348 Kraków, Poland}
 
	\author{Nelly H. Y. Ng}
	\affiliation{School of Physical and Mathematical Sciences, Nanyang Technological University,
637371, Singapore}
 
\date{\today}


\begin{abstract}

We investigate the thermodynamic constraints on the pivotal task of entanglement generation using out-of-equilibrium states through a model-independent framework with minimal assumptions. 
We establish a necessary and sufficient condition for a thermal process to generate bipartite qubit entanglement, starting from an initially separable state. 
Consequently, we identify the set of system states that \emph{cannot} be entangled, when no external work is invested. 
In the regime of infinite temperature, we analytically construct this set; while for finite temperature, we provide a simple criterion to verify whether any given initial state is or is not entanglable.
Furthermore, we provide an explicit construction of the \emph{future thermal cone of entanglement} — the set of entangled states that a given separable state can thermodynamically evolve to. 
We offer a detailed discussion on the properties of this cone, focusing on the interplay between entanglement and its volumetric properties. 
We conclude with several key remarks on the generation of entanglement beyond two-qubit systems, and discuss its dynamics in the presence of dissipation.

\end{abstract}

\maketitle


\section{Introduction}\label{Sec:introduction}
\blfootnote{$ ^*$\hspace{1pt}These authors contributed equally to this work.}

\setlength\intextsep{0pt}
\setlength{\columnsep}{1pt}%
\begin{wrapfigure}{l}{0.07\textwidth}
    \includegraphics[width=0.07\textwidth]{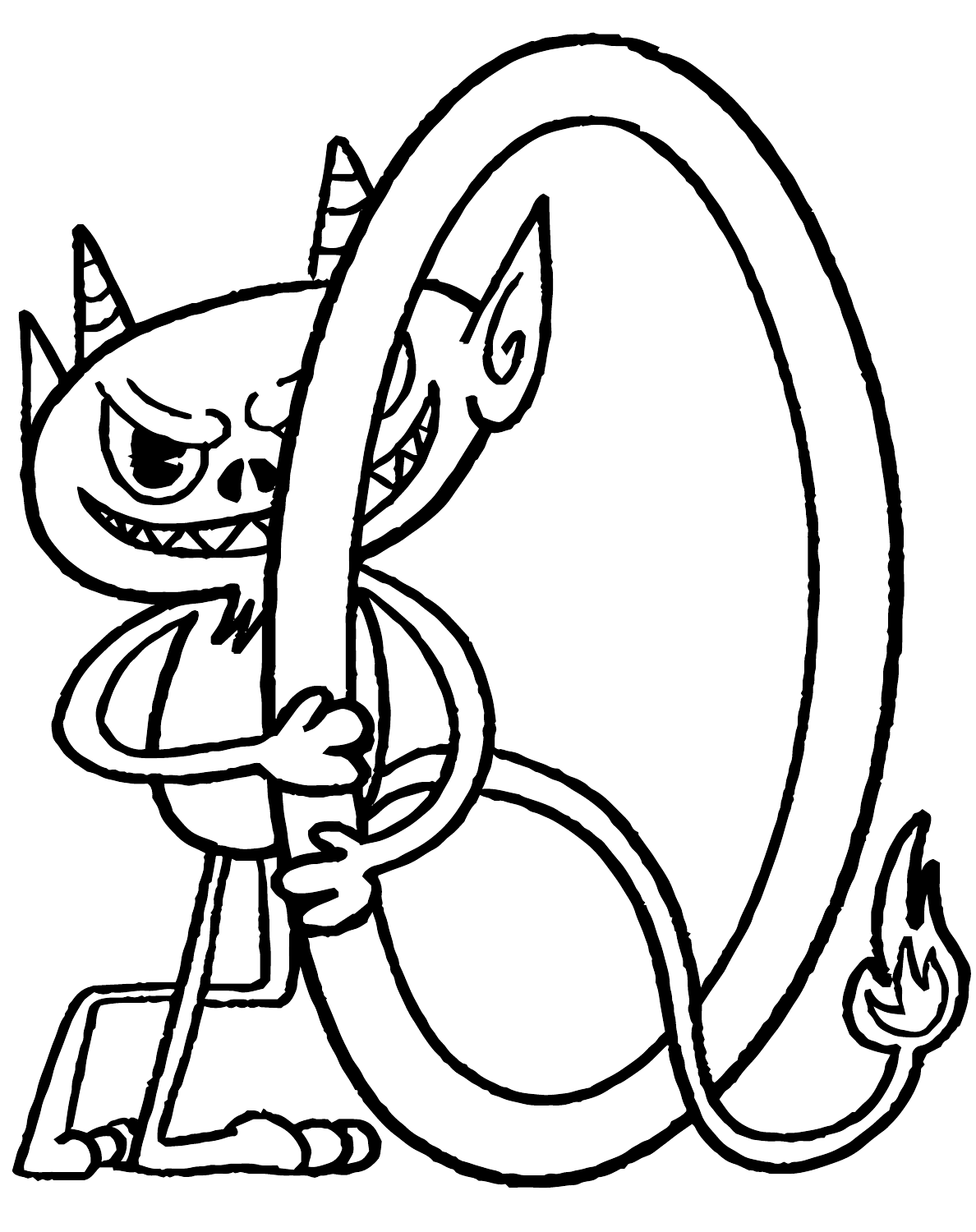}
\end{wrapfigure}
\setlength\intextsep{0pt}
\noindent 
uantum dynamics and thermodynamics have been burgeoning areas of study, particularly in the domain of quantum information science~\cite{horodecki2009quantum,kosloff2013quantum,Goold2016,vinjanampathy2016quantum,deffner2019quantum}. 
Upon moving from the macroscopic world to the quantum realm, it is natural to ask how thermodynamics affects quantum information processing~\cite{oppenheim2002thermodynamical,hovhannisyan2013entanglement,huber2015thermodynamic,perarnau2015extractable}. A thermodynamics-inspired paradigm for quantifying quantum correlations has led to the conclusion that one cannot extract the maximum work from an entangled state when it is shared by two separated parties~\cite{oppenheim2002thermodynamical, Alicki_CompPassive}. Conversely, correlations shared between quantum systems allow for optimal work storage~\cite{perarnau2015extractable}. 
In tandem with these inquiries, the fundamental energetic limitations on creating both classical and quantum correlations have been thoroughly investigated~\cite{Verstraete2001,huber2015thermodynamic,PhysRevE.93.042135, Verstraete2001,PhysRevA.62.022310}, assuming closed system dynamics.

The task of correlation engineering becomes increasingly interesting, in the more realistic scenario where quantum systems interact with a thermal environment, exchanging energy and entropy. Counterintuitively, it has been shown that steady-state entanglement can be generated~\cite{Plenio2002,hartmann2006steady,hartmann2007entanglement,bellomo2013steady} even when no external agents are involved~\cite{brask2015autonomous}. In particular, one can generate a maximally entangled steady-state between two qubits deterministically using a third ancillary qubit~\cite{khandelwal2024maximal}. These results suggest a paradigm shift where certain dissipative processes, rather than being detrimental to quantum correlations, might harness environment interactions for thermodynamically stable entanglement generation. As a result, a natural question arises -- \emph{what are the thermodynamic limitations for quantum correlation and entanglement generation?}

A powerful toolkit for answering this question is the \emph{resource-theoretic} approach~\cite{gour2015resource,PhysRevLett.111.250404}. One may initially think that the entanglement resource theory of local operations and classical communication (LOCC) should be the relevant framework for the entanglement manipulation problem. However, LOCC operations are naturally motivated only after entanglement has been distributed; if one is instead at the preparation phase in a local lab, prior to entanglement distribution, then it is more reasonable to consider a resource theory that contains entangling operations.
If entangling operations are allowed, the main limiting factor must come from quantum control and energetics~\cite{Janzing2000,horodecki2013quantumness, brandao2015second,Lostaglio2019}, which leads us to the resource theory of quantum thermodynamics. This model-independent framework allows one to investigate general thermodynamic processes by relying solely on the assumptions of total energy conservation and thermality of environments. Its broad scope has allowed us to establish bounds on the size of the bath required for the thermalization of a many-body system~\cite{Sparaciari2021}, bridge the gap between Markovian and non-Markovian thermal processes~\cite{PRXQuantum.4.040304,jeongraknelly}, and uncover optimal heat-bath algorithmic cooling strategies~\cite{alhambra2019heat}, among other significant results~(see~\cite{lostaglio2019introductory, torun2023compendious} for recent reviews).

In this work, we systematically probe the capability of using thermal processes to generate entanglement when given access to some initial state. In the particular case of two qubits,
we derive necessary and sufficient conditions for the existence of a thermal process that transforms a given separable initial state into a final state that is entangled. These conditions depend only on the energetic populations of the two-qubit system. A key insight that led to this finding, is the observation that a generic entangling strategy can be decomposed into two independent procedures: one that is explicitly achieved by open system dynamics (interaction with heat bath that results in changes of energetic population), and another closed unitary operation, i.e. an optimal rotation within the global energy subspace. The consistency of our results is confirmed by the problem addressed in Refs.~\cite{Verstraete2001,PhysRevA.62.022310}, where the authors investigate two-qubit mixed states and determine unconstrained global unitary operations that generate maximal entanglement. We approach the considered problem with additional restrictions on the alllowed operations, motivated by cornerstone findings on thermodynamic resource theories. 
To be specific, we assume that entangling operations must fall into the set of thermal operations~\cite{Janzing2000, horodecki2013fundamental}. 

Using the above framework we provide an 
explicit construction of the set of separable states that can become entangled. The degree of entanglement in this set can be assessed either by the volume of the set or by their negativity. In this regard, our results show that a Bell pair can always be thermodynamically produced at any temperature regime, provided that the system is initialized in its pure excited state. We also discuss the generation of entanglement in the presence of dissipation, examining the behavior of the volume of the future thermal cone of entanglement under Markovian thermal processes, and comparing it with the closed dynamics case. Finally, we offer some remarks on how our results can be extended to systems beyond two qubits.

The paper is organized as follows: We start by motivating how thermal resources can be used to generate entanglement in Section~\ref{Sec:motivating-example}, using an experimentally relevant example involving two spins interacting with a cavity mode. After this warm-up, we set the scene in Section~\ref{Sec:setting-the-scene} by recalling the resource-theoretic approach to quantum thermodynamics. In Section~\ref{Sec:entanglability}, we rigorously define the main quantifiers used for entanglement generation under thermal operations in arbitrary dimensions. 
Section~\ref{Sec:ent-criteria-2-qubit} focuses on 
the two-qubit system, and we derive necessary and sufficient conditions for entanglement preparation,
and proceeds to quantify the entanglement generated and discuss its applicability. We discuss in Section \ref{sec:discussions} how the task of entanglement generation changes for continuous-time thermal processes and high-dimensional systems, and conclude with an outlook in Section.~\ref{Sec:outlook}. 


\section{Motivating examples}
\label{Sec:motivating-example}

Let us begin with a warm-up example that gives a flavor of the main investigation. Consider two qubits $\mathsf{A}$ and $\mathsf{B}$, with the former prepared in the ground state and the latter in the excited state. The joint state of the composite system is given by $\rho_{\mathsf{AB}} = \ketbra{01}{01}$. Furthermore, assume that each qubit is described by the Hamiltonian $H_{0} = E\ketbraa{1}{1}$, the composite $\mathsf{AB}$ is described by a Hamiltonian \mbox{$H = H_{0}\otimes\mathbbm{1}+\mathbbm{1}\otimes H_{0}$}. This Hamiltonian gives rise to the degenerate energy subspace $\mathcal{V}_{E} = \mathrm{span}\{\ket{01},\ket{10}\}$, associated with energy $E$.
Then any unitary acting non-trivially only on $\mathcal{V}_{E}$ is energy-preserving, including
\begin{equation}\label{eq:01_to_Bell}
		U = \frac{1}{\sqrt{2}}\begin{pmatrix}
			1 & 1 \\ 1 & -1
		\end{pmatrix} \oplus\1_{\mathcal{H}\setminus\mathcal{V}_{E}},
\end{equation} 
that maps product states $\ket{01}$ and $\ket{10}$ into Bell states $\ket{\phi^{+}}$ and $\ket{\phi^{-}}$, where $\ket{\phi^{\pm}}=\frac{1}{\sqrt{2}}(\ket{01}\pm\ket{10})$.
In particular, when the initial state is either $\ket{01}$ or $\ket{10}$, this transformation can be implemented with Jaynes-Cummings interactions only~\cite{Jaynes1963}
\begin{align}\label{eq:JC_Hamiltonian}
		H_{\mathrm{JC}} = \sigma^{+}_{\ms{A(B)}}\otimes a_{\ms{R}} + \sigma^{-}_{\ms{A(B)}}\otimes a^{\dagger}_{\ms{R}},
\end{align}
albeit indirectly by introducing a mediating bosonic mode $\ms{R}$~\cite{Cirac1994_catalysis}. We use the notation $\sigma^{+} = \ketbraa{1}{0}$ and $\sigma^{-} = \ketbraa{0}{1}$ for each qubit $\ms{A}$ or $\ms{B}$, while $a_{\ms R}$ and $a_{\ms R}^{\dagger}$ are the annihilation and creation operators for the bosonic mode.
\begin{figure}[t]
	\centering
	\includegraphics{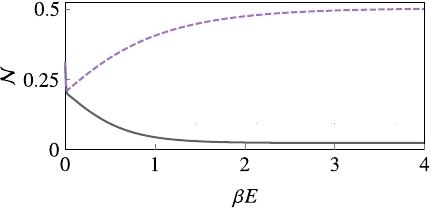}
	\caption{\textbf{Negativity as a function of $\beta  E$.}
		Preprocessing the spin system by coupling to a thermal bath opens up the possibility of generating entanglement through energy-preserving unitaries that encode the laws of thermodynamics. We plotted the maximal negativity~\cite{Vidal2002} [defined in Eq.~\eqref{Eq:negativity}] obtainable by a two-step process \emph{i)} transforming one of the spins via Jaynes-Cummings interaction with a thermal bosonic mode and \emph{ii}) applying the entangling unitary, Eq.~\eqref{eq:01_to_Bell}. The $x$-axis is the product of inverse environment temperature $\beta$ and energy gap $ E$. The black solid curve represents the resulting negativity from a state initially prepared in the ground state, while the purple dashed curve is that of the initially fully excited state.}
	\label{Fig:motivating-example}
\end{figure}

Suppose now that both qubits are prepared in the ground state, \mbox{$\rho_{\ms{AB}} = \ketbra{00}{00}$}. 
Then, it is \emph{impossible} to entangle them using Eq.~\eqref{eq:01_to_Bell}, or any other energy-preserving unitary acting on the composite $\mathsf{AB}$, as $\rho_{\ms{AB}}$ has no support in $\mathcal{V}_{E}$. 
However, in the presence of a thermal bath, the populations of the composite system can be modified, thereby enabling the generation of entanglement.
A simple yet experimentally relevant example illustrating this effect involves a two-step protocol. 
First, assume access to a bosonic mode with a resonant Hamiltonian $H_{\ms R} =  E a_{\ms R}^{\dagger} a_{\ms R}$, thermalized with respect to an inverse temperature $\beta$.
Qubit $\ms B$ interacts with the bosonic mode for a finite time, which creates non-zero support of the composite state in $\mathcal{V}_{\ms E}$. 
Second, apply the unitary given in Eq.~\eqref{eq:01_to_Bell}. This two-step protocol transforms $\rho_{\ms{AB}}$ into
\begin{equation}
\sigma_{\ms{AB}} = f(\beta  E)\ketbra{00}{00}+[1-f(\beta  E)]\ketbra{\phi^+}{\phi^+}.
\end{equation}
The ground state population $f(\beta  E)$ depends on the interaction Hamiltonian and the duration of the interaction, in addition to the temperature and the energy gap. 
Let us assume that we apply Jaynes-Cummings interaction in Eq.~\eqref{eq:JC_Hamiltonian} and optimize the time duration to minimize $f(\beta  E)$.
It turns out that this optimized quantity decreases monotonically as $\beta  E$ decreases.
Hence, by additionally allowing the initial state \mbox{$\rho_{\ms{AB}} = \ketbra{00}{00}$} to interact with a thermal bath, one can generate entangled states; higher amounts of entanglement can be observed at higher temperatures as well as with smaller energy gaps. The same protocol can be applied when both spins are excited. 
However, for this initial state, contrary to the previous example, $f(\beta  E)$ is the smallest -- and thus the final state is more entangled -- when the ambient temperature is lower. In Fig.~\ref{Fig:motivating-example}, we quantify the optimal amount of entanglement produced in these two cases, measured via the negativity, as a function of the inverse temperature of the bosonic mode.

In the above discussion, we have constructed specific thermal processes that precondition the composite state into entanglable ones, by using a simple bosonic mode and the Jaynes-Cummings interaction. A natural question follows: \emph{what is the ultimate bound of the entanglement generation, for the most generic, all-encompassing set of thermal processes?}


\section{Setting the scene} \label{Sec:setting-the-scene}

We investigate entanglement generation in an $N$-partite system whose initial state and the non-interacting Hamiltonian are given as
\begin{equation}
\rho \in \mathrm{conv}\qty( \bigotimes_{i=1}^N\rho_i)  \quad \text{and} \quad H = \sum_{i=1}^N \mathbbm{1}^{\otimes (i-1)}\otimes H_i\otimes \mathbbm{1}^{\otimes (N-i)}.
\end{equation}
The assumption that the Hamiltonian is non-interacting is natural, and excludes the trivial scenario of entangling simply via a fully thermalizing process.
The Hilbert space of the composite system is denoted by $\mathcal{H} = \bigotimes_{i=1}^N\mathcal{H}_i$. 
The energy level structure creates a natural division of the full Hilbert space into energy-degenerate subspaces $\mathcal{V}_E$,
\begin{equation}
    \mathcal{H} = \bigoplus_{E}\mathcal{V}_{E},
\end{equation}
where $ \mathcal{V}_E = \mathrm{span}\left\{\ket{\psi}\in\mathcal{H}:H\ket{\psi} = E\ket{\psi}\right\}$.
Note that energy-preserving unitaries acting on $\mathcal{H}$ are those that can be written as $U = \bigoplus_{E}U_{E}$, where each unitary $U_{E}$ acts on $\mathcal{V}_{E}$.
In other words, they do not mix states from different energy subspaces -- an essential feature for subsequent considerations.

\subsection{Thermal operations}

The multipartite system interacts with the thermal environment that is initialized in a Gibbs state
\begin{equation}
    \label{Eq:thermal_state}
    \gamma_{\ms R} = \frac{e^{-\beta H_{\ms R}}}{\Tr(e^{-\beta H_{\ms R}})},
\end{equation}
with a Hamiltonian $H_{\ms R}$ and an inverse temperature $\beta$.
We make minimal assumptions on the joint system-bath dynamics: the composite system is closed and thus evolves unitarily; in addition, this unitary evolution is energy-preserving. Formally, this leads to a set of operations known as \emph{thermal operations} (TOs)~\cite{Janzing2000}, which transform the state $\rho$ as follows:
\begin{equation}
    \label{Eq:thermal_operations}
\mathcal{E} (\rho)=\Tr_{\ms R} \left[U\left(\rho \otimes\gamma_{\ms R}\right)U^{\dagger}\right],
\end{equation}
where $U$ is a joint unitary that commutes with the total Hamiltonian of the system and the bath
\begin{equation}
    [U, H\otimes \iden_{\ms R}+ \iden \otimes H_{\ms R}] = 0,
\end{equation}
and the environment Hamiltonian $H_{\ms R}$ can be chosen arbitrarily. Given that there are no further constraints on $U$, the system and the bath can develop arbitrarily strong correlations (including entanglement).

\subsection{State transformation conditions }

We focus on \emph{energy-incoherent} states $\rho$, which means that each partition is (block) diagonal in the energy eigenbasis, resulting in an output state that is also (block) diagonal. Mathematically, the multipartite state $\rho$ commutes with the Hamiltonian, i.e., $[\rho,H]=0$. Under this assumption, the state $\rho$ can be fully represented by a probability vector $\v p$ consisting of the eigenvalues of $\rho$, which correspond to populations in the energy eigenbasis, $\rho = \sum_i p_i\op{E_i}$. We henceforth directly refer to $\v p= (p_1, ..., p_d)$ as energy-incoherent states, which live within the probability simplex
\begin{equation}\label{Eq:probability-simplex}
    \Delta_d= \qty{\v{p} = (p_1, ..., p_d) \in \mathbbm{R}_{\geq0}^d : \sum_i p_i = 1} .  
\end{equation}
Analogously, the thermal equilibrium state of the system, Eq.~\eqref{Eq:thermal_state} with $H_{\ms R}$ replaced by $H$, can be represented by a vector of thermal populations, denoted as $\v{\gamma}$. 

When considering transformations between different energy-incoherent states under thermal operations, an important aspect of a state is its associated $\beta$-order -- a permutation which arranges populations, rescaled with respect to the Gibbs state, in a non-increasing order.
\begin{defn}[$\beta$-ordering]\label{Def-beta-ordering}
Suppose that population vectors $\v p$ and $\v \gamma$, corresponding to a $d$-dimensional system state and its Gibbs state with respect to the ambient temperature $1/\beta$, are given. The $\beta$-ordering of $\v p$ is given as a permutation $\pi_{\v p} \in \mathcal{S}_{d}$ of energy levels within the symmetric group $\mathcal{S}_d$, such that the element-wise ratio between $\v p$ and $\v \gamma$ is non-increasing, i.e.
\begin{align}
\frac{p_{(\pi_{\v p})i}}{\gamma_{(\pi_{\v p})i}} \geq \frac{p_{(\pi_{\v p})j}}{\gamma_{(\pi_{\v p})_j}},\quad \text{for all}\quad i \leq j.
\end{align}
\end{defn}
For simplicity, we also denote a $\beta$-ordering $\pi$ as a vector $\v{\pi} = (\pi(1), \pi(2), \cdots, \pi(d))$ in the rest of the manuscript. For the specific case of $d = 4$, we denote the $\beta$-ordering $(2,1,3,4)$ as $\v{\pi}^{\star}$, since it plays a special role in entanglement generation, which will become clear in Section~\ref{Sec:entanglability} (Theorem~\ref{lem:criterion_TE}).

Finally, the set of states $\mathcal{T}_+(\v p)$ that can be achieved via thermal operations from a given initial state $\v{p}$ is called the \emph{future thermal cone}~\cite{Korzekwa2017,deoliveirajunior2022}. This set is a convex polytope comprising at most $d!$ extreme points, each corresponding to different $\beta$-orderings $\v \pi$~\cite{Korzekwa2017,deoliveirajunior2022}. Mathematically, it is represented by:
\begin{equation}
\mathcal{T}_{+}(\v p) = \operatorname{conv}[{\v p^{\v \pi}, \v \pi\in\S_d}] , 
\end{equation}
where $\v p^{\v \pi}$ are extreme points characterized by thermomajorization~\cite{Lostaglio2018elementarythermal} (see Lemma~\ref{lem_extreme}, Appendix~\ref{App:thermomajorisation}).
Again, for $d = 4$, we denote the extreme point corresponding to the ordering $\v{\pi}^{\star}$ as $\v{p}^{\star}$.

Although not explicitly stated, it is important to mention that the existence of a thermal operation that induces the transition between incoherent states $ \v p \rightarrow \v p'$ is equivalent to the existence of a Gibbs-preserving (GP) stochastic matrix mapping between these states~\cite{Janzing2000}.
Therefore, when studying the thermodynamic interconversion problem between block diagonal states, one can focus on stochastic matrices and probability vectors instead of CPTP maps and density matrices. Nonetheless, our forthcoming discussion will primarily center on the concept of thermal cones, rather than on GP matrices or thermal operations.


\section{Entanglability under thermal operations}\label{Sec:entanglability}

In this section, we introduce the main definitions that will enable us to derive the necessary and sufficient conditions for entangling a separable state under thermal operations. At first glance, this might appear abstract due to its generality, but these notions will be applied to a specific case in Section~\ref{Sec:ent-criteria-2-qubit}.

We begin by defining a subset of unitary operations that preserve the energy of the system and keep the energy level structure intact. Note that this is motivated by capturing the thermodynamic constraints in producing entanglement when the system is closed, i.e. when the unitary dynamics satisfy energy conservation. We refer to this subset as \emph{energy subspace-unitaries}, denoted by $\mathcal{U}_{\mathrm{subs}}$. Formally, this set is given by
\begin{equation}
    \mathcal{U}_{\mathrm{subs}} = \qty{U_{\mathrm{subs}} = \bigoplus_E U_{E}},
\end{equation}
where $U_{E}$ acts on $\mathcal{V}_{E}$ subspace of definite energy $E$. Consequently, any unitary operation from this set is an admissible operation within the framework of thermal operations. 

Next, we introduce an entanglement measure $\mathcal{M}$, which is non-negative for all separable states, $\mathcal{M}(\rho_{\ms{sep}}) \geq 0$. One such entanglement measure which we focus on is derived from the famous Horodecki-Peres positive partial transpose (PPT) criterion~\cite{Peres1996,Horodecki1996}, which states that separable states have positive partial transpose, $\rho^{T_{\ms A}}\geq0$; using this, we define $\mathcal{M}(\rho)\coloneqq\min_{\ket{\psi}}\ev{\rho^{T_A}}{\psi} = \lambda_{-}$ as the minimal eigenvalue of the partial transpose. 
Therefore, the quantity
\begin{equation}
    f(\v{p})\coloneqq\min_{U\in\mathcal{U}_{\mathrm{subs}}}\mathcal{M} \qty(U\rho U^{\dagger})
\end{equation}
is a good witness for the thermodynamic capacity of creating entanglement from $\v{p}$ via closed, energy-conserving system dynamics. 

\begin{defn}[Subspace (non-)entanglable set]\label{def:SNS}
    Given a state $\v{p}\in\Delta_d$, the subspace non-entanglable set is defined as the set of states that remain PPT under subspace rotations, i.e.
    \begin{equation}\label{eq:NE_def}
        \mathbbm{NE} \coloneqq \qty{\v{p} \, \big| \, f(\v{p})\geq 0}.
    \end{equation}
The function $f$ acts as an entanglement witness, i.e. $f(\v p)\leq 0$ implies the possibility of creating entanglement. Furthermore, the \emph{subspace entanglable set} is defined as the complement of $\mathbbm{NE}$, i.e. $ \mathbbm{E} \coloneqq \Delta_d \setminus \mathbbm{NE} $.
\end{defn}
This definition can be extended by employing multiple entanglement measures $\mathcal{M}_1,\hdots,\mathcal{M}_M$, resulting in multiple witnesses $f_1,\hdots,f_M$, and requiring that all these functions yield positive values. 

It is important to note that, states in the set $\mathbbm{E}$ are entanglable even when neither heat or work exchange are allowed.
Notably, when thermal resources are available, i.e. under thermal operations, the set of entanglable states expand. 

\begin{defn}[Future thermal cone of entanglement]\label{Def:thermal-entangled-cones}
    The future thermal cone of entanglement of a state $\v{p}$ is defined as 
    \begin{equation}
        \mathcal{T}_+^{\mathrm{ENT}}(\v p) \coloneqq \qty{\v{q} \, \big| \, f(\v{q})<0,\ \v{q}\in \mathcal{T}_{+}(\v p)} = \mathcal{T}_{+}(\v p)\cap\mathbbm{E}.
    \end{equation}
\end{defn}

Now we introduce the primary object of this paper: the set of states that cannot become entangled under thermal operations.
\begin{defn}[Thermally non-entanglable set]\label{def:TNE}
    Given the ambient temperature $\beta^{-1}$, the thermally non-entanglable set is defined as 
    \begin{equation}\label{eq:TNE_def}
        \mathbbm{TNE}(\beta) = \qty{\v{p}\, \big| \,  \min_{\v{q}\in \mathcal{T}_{+}(\v{p})}f(\v{q}) \geq 0},
    \end{equation}
with the entanglement measure $\mathcal{M}$.
Analogous to $\mathbbm{E}$, we also define the set of thermally entanglable states $\mathbbm{TE}(\beta) \coloneqq \Delta_{d}\setminus\mathbbm{TNE}(\beta)$.
\end{defn}
By definition, $\v{p}$ is thermally non-entanglable if and only if its future cone of entanglement is empty, i.e.
\begin{equation}
    \v{p} \notin \mathbbm{TNE}(\beta) ~~\Longleftrightarrow ~~\mathcal{T}_+^{\mathrm{ENT}}(\v p) = \varnothing.
\end{equation}

We can then characterize $\mathbbm{TNE}(\beta)$ with the following proposition.
Hereafter, we drop the explicit dependence on $\beta$ in $\mathbbm{TNE}$ and $\mathbbm{TE}$ when there is no concern for confusion.
\begin{prop}[Thermally non-entanglable set]\label{Thm:thermally-non-entanglable-set}
    Suppose that the entanglement measure $\mathcal{M}$ is a concave function. Then the set $\mathbbm{TNE}$ can be written as
    \begin{equation}
       \mathbbm{TNE} = \qty{\v{p}\mid \forall_\pi f(\v{p}_\pi) \geq 0}.
    \end{equation}
    When $\beta E = 0$, the above expression simplifies to
    \begin{equation}
        \mathbbm{TNE} = \bigcap_{\pi\in\mathcal{S}_d}\qty{\v{p}\mid f(\Pi\v{p}) \geq 0} =\vcentcolon \bigcap_{\pi\in\mathcal{S}_d} \,\Omega_\pi\ ,
    \end{equation}
    where $\Pi$ represents a permutation matrix corresponding to $\pi \in \mathcal{S}_d$; in particular, for $\pi = \mathrm{id}$ we have $\Omega_{\mathrm{id}} = \mathbbm{NE}$.
\end{prop}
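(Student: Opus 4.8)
The plan is to reduce the minimization of $f$ over the (generally infinitely many) points of the future thermal cone to a minimization over its finitely many vertices, exploiting concavity. First I would note that $f$ itself is concave on $\Delta_d$: for each fixed $U\in\mathcal{U}_{\mathrm{subs}}$ the map $\v p\mapsto U\rho U^{\dagger}$ (with $\rho=\sum_i p_i\op{E_i}$) is affine, and $\mathcal{M}$ is concave by hypothesis, so $\v p\mapsto\mathcal{M}(U\rho U^{\dagger})$ is concave; taking the pointwise infimum over the compact set $\mathcal{U}_{\mathrm{subs}}$ preserves concavity, hence $f=\min_{U\in\mathcal{U}_{\mathrm{subs}}}\mathcal{M}(U\rho U^{\dagger})$ is concave.

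Next I would invoke the polytope structure of the cone. Recall from Section~\ref{Sec:setting-the-scene} that $\mathcal{T}_{+}(\v p)=\operatorname{conv}[\v p^{\v\pi}:\v\pi\in\S_d]$, whose vertices are the $\beta$-ordered extreme points $\v p^{\v\pi}$ (Lemma~\ref{lem_extreme}). A concave function on such a polytope attains its minimum at a vertex: writing $\v q=\sum_\pi\lambda_\pi\v p^{\v\pi}$ with $\lambda_\pi\geq0$ and $\sum_\pi\lambda_\pi=1$, concavity gives $f(\v q)\geq\sum_\pi\lambda_\pi f(\v p^{\v\pi})\geq\min_\pi f(\v p^{\v\pi})$, and equality is realized at the corresponding vertex. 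Hence
\begin{equation}
\min_{\v q\in\mathcal{T}_{+}(\v p)}f(\v q)=\min_{\pi\in\S_d}f(\v p^{\v\pi}),
\end{equation}
so $\v p\in\mathbbm{TNE}$ if and only if $f(\v p^{\v\pi})\geq0$ for every $\pi$, which is the first displayed claim (with $\v p_\pi$ denoting the extreme point $\v p^{\v\pi}$). For the case $\beta E=0$ the Gibbs vector is uniform, so $\beta$-ordering becomes ordinary sorting and thermomajorization becomes majorization; then $\mathcal{T}_{+}(\v p)$ is the permutohedron of $\v p$, whose vertices are precisely the permuted vectors $\Pi\v p$ (Lemma~\ref{lem_extreme} with flat $\v\gamma$). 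Substituting $\v p^{\v\pi}=\Pi\v p$ gives
\begin{equation}
\mathbbm{TNE}=\qty{\v p\mid\forall_\pi\,f(\Pi\v p)\geq0}=\bigcap_{\pi\in\S_d}\qty{\v p\mid f(\Pi\v p)\geq0}=\bigcap_{\pi\in\S_d}\Omega_\pi,
\end{equation}
and for $\pi=\mathrm{id}$, $\Omega_{\mathrm{id}}=\qty{\v p\mid f(\v p)\geq0}=\mathbbm{NE}$ by Definition~\ref{def:SNS}.

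The concavity bookkeeping and the ``minimum of a concave function on a polytope sits at a vertex'' step are routine; the place I expect to need care is the characterization of the vertices of $\mathcal{T}_{+}(\v p)$. For $\beta E\neq0$ the extreme point $\v p^{\v\pi}$ depends nonlinearly on $\v p$ (through the construction of the thermomajorization curve), so one genuinely cannot reduce $\mathbbm{TNE}$ to an intersection of the half-space-like sets $\{\v p\mid f(\Pi\v p)\geq0\}$; it is exactly the collapse $\v p^{\v\pi}\to\Pi\v p$ at infinite temperature that yields the clean intersection formula, and pinning down that reduction (via Lemma~\ref{lem_extreme}) is the main technical point.
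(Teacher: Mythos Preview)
Your argument is correct. The paper states Proposition~\ref{Thm:thermally-non-entanglable-set} without an explicit proof, so there is no detailed argument to compare against; your reasoning---concavity of $f$ inherited from concavity of $\mathcal{M}$ via the affine map $\v p\mapsto U\rho U^{\dagger}$ and the infimum, followed by the standard fact that a concave function on a polytope attains its minimum at a vertex, and finally the identification $\v p^{\v\pi}=\Pi\v p$ at $\beta E=0$ via Lemma~\ref{lem_extreme}---is precisely the natural argument the paper implicitly relies on.
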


Definition~\ref{Def:thermal-entangled-cones} also allows us to quantify the entanglement generation power using geometric measures. 
The volume of the future thermal cone of entanglement can gauge how useful an initial state is for producing entangled states. Given a state $\v p$, this quantity is defined as 
\begin{equation}\label{Eq:volume-future-ent-cone}
    \mathbf{V}^{\mathrm{ENT}}_{\mathrm{TO}}(\v p) \coloneqq \frac{V\left[\mathcal{T}^{\mathrm{ENT}}_+ (\v p)\right]}{V(\Delta_d)},
\end{equation}
where $V$ is the volume measured using the Euclidean metric. Likewise, the volume of (non-)entanglable sets, defined in Definitions~\ref{def:SNS} and~\ref{def:TNE}, are denoted as
\begin{equation}
\mathbf{V}_{\mathbbm{X}} \coloneqq \frac{V({\mathbbm{X}})}{V(\Delta_d)} \quad , \quad \mathbbm{X} = \mathbbm{NE},\ \mathbbm{E},\ \mathbbm{TNE}.
\end{equation}

\section{Two-qubit systems}\label{Sec:ent-criteria-2-qubit}

Let us now apply the general framework we developed to a specific scenario: two-qubit entanglement.
In this section, we find an efficient criterion for determining whether a two-qubit state $\v{p}$ is thermally non-entanglable, which only checks if $\v{p}^{\star}$ is subspace non-entanglable. 
We also study the geometry of the future thermal cone of entanglement and the thermally non-entanglable set for different ambient temperatures.

\subsection{Subspace entanglable set for two qubits}
Assume that Alice and Bob share a pair of qubits that are initially prepared in a separable state $\rho_{\ms{AB}}$.
Each of them is described by a local Hamiltonian $H_{A} = H_{B} =  E \ketbra{1}{1}$, and they are allowed to interact with a single heat bath via thermal operations. If there exists a thermal operation that maps $\rho_{\ms{AB}}$ to $\sigma_{\ms{AB}}$, where $[\rho_{\ms{AB}}, H] =0$ and $[\sigma_{\ms{AB}}, H]=0$, then 
 $\sigma_{\ms{AB}}$ can be expressed as
\begin{equation}\label{Eq:sigma-block-diagonal}
    \sigma_{\ms{AB}} = q_1 \ketbra{00}{00} + q_2 \ketbra{\psi}{\psi} + q_3 \ketbra{\psi_{\perp}}{\psi_{\perp}} + q_4 \ketbra{11}{11},
\end{equation}
where $\ket{\psi}, \lvert\psi^{\perp}\rangle \in \mathcal{V}_{E}$ and $\langle\psi\vert\psi^{\perp}\rangle = 0$. This observation follows from the fact that one can freely rotate the populations within the degenerate energy subspace. The unitary that connects $ \ket{\psi} = U\ket{10} $ and \mbox{$ \ket{\psi_{\perp}} = U\ket{01}$} can be written with two parameters $ \theta,\phi\in[0,2\pi]$
\begin{equation}\label{Eq:unitary}
		U = \begin{pmatrix}
			\cos\theta  & \sin\theta e^{i\phi} \\ -\sin\theta e^{-i\phi} & \cos\theta
		\end{pmatrix} \oplus\mathbbm{1}_{\mathcal{H}\setminus\mathcal{V}_{E}},
\end{equation} 
apart from an irrelevant phase factor, global to $\mathcal{V}_{E}$.
Then the final state $\sigma $ reads
\begin{equation} 
 \label{Eq:target-state}
\!\!\sigma = \begin{pmatrix}
			q_{1} & 0 & 0 & 0 \\ 0 & q_{2}\cos^{2}\theta+q_{3}\sin^{2}\theta & \frac{(q_{3}-q_{2})}{2}\sin2\theta e^{i\phi} & 0\\ 0 & \frac{(q_{3}-q_{2})}{2}\sin2\theta e^{-i\phi} & q_{2}\sin^{2}\theta + q_{3}\cos^{2}\theta & 0 \\ 0 & 0 & 0 & q_{4}
		\end{pmatrix}.
\end{equation}

For a two-qubit system, separability can be fully determined by the positive partial transposition (PPT) criterion~\cite{Peres1996,Horodecki1996}. When the partial transposition is applied to the state $\sigma$ in Eq.~\eqref{Eq:target-state}, its smallest eigenvalue is 
	\begin{equation}\label{Eq:min-eigenvalues}
	\!\!\!\lambda_{-} = \frac{1}{2}\left[\left(q_{1}+q_{4}\right) - \sqrt{\left[(q_{2}-q_{3})^{2}\sin^22\theta + (q_{1}-q_{4})^{2}\right]}\right],
	\end{equation}	
which is minimized when $ \sin^22\theta = 1$\footnote{This minimum value coincides with the case where we diagonalize the energy degenerate subspace in Bell basis, i.e., $ \ket{\psi},\lvert\psi_{\perp}\rangle = \lvert\Psi^{\pm}\rangle$.}. Thus, $\lambda_{-} $ can become negative if and only if 
    \begin{equation}\label{Eq:2qubit_PPT_pop}
		4q_{1}q_{4} < (q_{2}-q_{3})^{2}.
	\end{equation}
In other words, a state with population vector $\v{q} = (q_{1},q_{2},q_{3},q_{4})$ is subspace entanglable if and only if Eq.~\eqref{Eq:2qubit_PPT_pop} is true. We will refer to this inequality as the entanglement constraint or criterion. In addition, we define an associated entanglement measure
\begin{equation}\label{Eq:entanglement-measure}
    f(\v q) := 4q_1 q_4 - (q_2-q_3)^2,
\end{equation}
which detects entanglement whenever $f(\v q) < 0$. 

 \begin{figure}[t]
    \centering
    \includegraphics{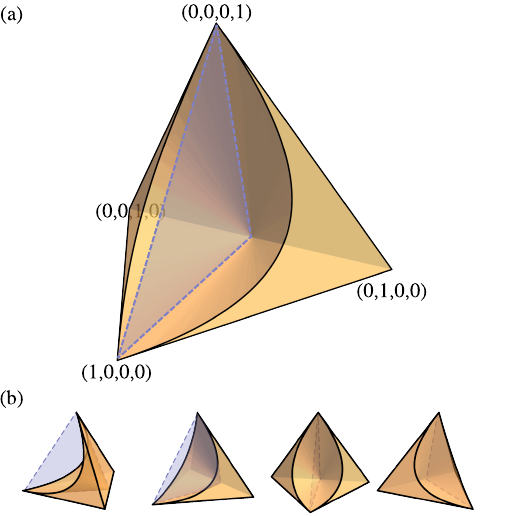}
    \caption{\textbf{Subspace entanglable set $\mathbbm{E}$}. Geometry of the set of states that can get entangled via energy-preserving unitaries acting on the degenerate energy subspace.
					Note that the set $\mathbbm{E}$ consists of two disjoints sets each on the opposite side of the plane $p_{2} = p_{3}$ coloured in blue.}
    \label{Fig-SES-set}
\end{figure}

\begin{figure*}[ht]
	\centering
	\includegraphics[width=0.85\textwidth]{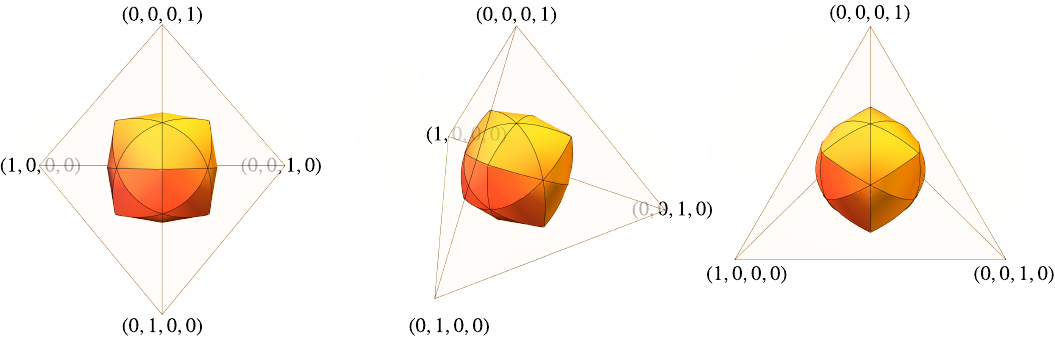}
	\caption{\textbf{Thermally non-entanglable states $\mathbbm{TNE}$ for $\beta E= 0$}. States that belong to the coloured geometrical shape cannot get entangled under thermal operations when $\beta E= 0$.  }
	\label{Fig:no-go-entanglement}
\end{figure*}

Figure~\hyperref[Fig-SES-set]{\ref{Fig-SES-set}} depicts the subspace entanglable set $\mathbbm{E}$~(see Figure~\ref{Fig-non-entanglable-set} in Appendix~\ref{App:convexity-nonentaglable} for its complement $\mathbbm{NE}$). States in $\mathbbm{E}$ can become entangled by some energy-preserving unitary of the form Eq.~\eqref{Eq:unitary}. 

The sets $\mathbbm{NE}$ and $\mathbbm{E}$ are independent of the ambient temperature and their volume can be calculated by numerical methods or direct integration. Remarkably, the normalised volume of $\mathbbm{E}$ is approximately $\mathbf{V}_{\mathbbm{E}}\approx 21/62$; much less than half of the states in $\Delta_4$ are subspace entanglable. In contrast, as we note in Section~\hyperref[Sec:quantifying]{V-C}, the future thermal cone of entanglement of the highest excited state $\v p = (0,0,0,1)$ is precisely $\mathbbm{E}$, as this state can achieve any other state in the probability simplex, regardless of the ambient temperature. Hence, the full volume $\mathbf{V}_{\mathbbm E}$ can be achieved via thermal operations. Futhermore, we prove the convexity of the subspace non-entanglable set, find its extreme points, and identify a continuous family of boundary points (Appendix~\ref{App:convexity-nonentaglable}).

\subsection{Thermally entanglable states}\label{Sec:future-thermal-entanglement}

We now construct the set of states that can generate entanglement with the help of a thermal environment. In the simple case of $\beta E = 0$, its complement, the set of thermally non-entanglable states $\mathbbm{TNE}$, can be immediately constructed from Proposition~\ref{Thm:thermally-non-entanglable-set}. The resulting set is convex and occupies approximately one third of the probability simplex [see Fig.~\eqref{Fig:no-go-entanglement}]. However, the structure of this set becomes more intricate when $\beta E\neq0$. Nevertheless, we simplify the problem by showing that the task of determining whether $\v{p}\in\mathbbm{TNE}$ -- defined by Equation~\eqref{eq:TNE_def} as including a non-trivial minimization -- is reduced to verifying if $\v{p}^{\star}\in\mathbbm{NE}$, which only requires checking a single inequality $f(\v{p}^{\star})\geq 0$. 

\begin{thm}[Criterion for determining $\v{p}\in\mathbbm{TE}$]\label{lem:criterion_TE}
	Let $\v{p}$ be a two-qubit state in $\mathbbm{TE}$. Then $\v{p}^{\star}$, the extreme point of $\mathcal{T}_{+}(\v{p})$ corresponding to the $\beta$-ordering $\v{\pi}^{\star} = (2,1,3,4)$, is in $\mathbbm{E}$.
\end{thm}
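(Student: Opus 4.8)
The plan is to combine the vertex characterisation of thermal cones (Proposition~\ref{Thm:thermally-non-entanglable-set}) with the thermomajorisation description of those vertices, and then single out $\v{\pi}^{\star}=(2,1,3,4)$ as the ``most entangling'' $\beta$-ordering by concavity arguments.

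\emph{Step 1: reduction to vertices.} First I would invoke Proposition~\ref{Thm:thermally-non-entanglable-set}, which applies here because the partial-transpose eigenvalue $\lambda_{-}$ of Eq.~\eqref{Eq:min-eigenvalues} — whose sign coincides with that of $f$ in Eq.~\eqref{Eq:entanglement-measure} — is concave, so that a function nonnegative at all vertices of the polytope $\mathcal{T}_{+}(\v p)$ is nonnegative throughout it. Consequently $\v p\in\mathbbm{TE}$ iff $f(\v p^{\v\pi})<0$ for some $\pi\in\S_4$, and it suffices to establish the contrapositive of the theorem: $f(\v p^{\star})\ge 0$ forces $f(\v p^{\v\pi})\ge 0$ for every $\pi$. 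Since the two middle levels are degenerate, $\gamma_2=\gamma_3$, and $f$ is invariant under exchanging its second and third arguments, the vertices $\v p^{\v\pi}$ and $\v p^{\v\pi'}$ — with $\pi'$ obtained from $\pi$ by swapping the labels $2$ and $3$ — differ only by that exchange of components and have equal $f$; so I would restrict attention to the twelve orderings in which $2$ precedes $3$, for which $p^{\v\pi}_2\ge p^{\v\pi}_3\ge 0$.

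\emph{Step 2: one curve, and the extremal ordering.} By the thermomajorisation description of the vertices (Lemma~\ref{lem_extreme}), all twelve $\v p^{\v\pi}$ are read off the \emph{same} concave nondecreasing curve $g=g_{\v p}$ as increments over the consecutive windows $[x^{\pi}_{k-1},x^{\pi}_{k}]$, $x^{\pi}_{k}=\sum_{i\le k}\gamma_{\pi(i)}$. The structural facts I would lean on are $\gamma_1\ge\gamma_2=\gamma_3\ge\gamma_4$ and $\gamma_1\gamma_4=\gamma_2\gamma_3=\gamma_2^{2}$, together with the elementary consequence of concavity that a window of fixed width contributes a larger increment the further left it sits. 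I would then argue that $\v{\pi}^{\star}$ is extremal in three moves. (i) Moving $2$ to the first slot makes $p^{\v\pi}_2=g(\gamma_2)$ maximal while, by concavity, keeping the other increments controlled; this both reduces the problem to the six orderings beginning with $2$ and dominates the ``$2$-not-first'' orderings (for instance $f(\v p^{(1,2,3,4)})\ge f(\v p^{\star})$, since $(1,2,3,4)$ and $(2,1,3,4)$ share the level-$3$ and level-$4$ windows while the former has larger $p_1$ and smaller $p_2$). (ii) Moving $4$ to the last slot makes $p^{\v\pi}_4=1-g(1-\gamma_4)$ minimal, and since $\gamma_4$ is the smallest weight this is also no larger than any extreme-slot increment of level $1$, which drives $4p^{\v\pi}_1p^{\v\pi}_4$ down and leaves only $\v{\pi}^{\star}$ and $(2,3,1,4)$. (iii) For this last pair, writing $D=p^{\star}_2-p^{\star}_3\ge 0$ and $\delta=p^{(2,3,1,4)}_3-p^{\star}_3=p^{\star}_1-p^{(2,3,1,4)}_1\ge 0$ (a single transfer of population from level $1$ to level $3$), one has
\begin{equation}
 f\!\left(\v p^{(2,3,1,4)}\right)-f(\v p^{\star})=\delta\bigl[\,2D-4p^{\star}_4-\delta\,\bigr],
\end{equation}
with $0\le\delta\le D$ and $D-\delta=2g(\gamma_2)-g(2\gamma_2)\ge 0$ by concavity; deducing $f(\v p^{(2,3,1,4)})\ge 0$ from $f(\v p^{\star})\ge 0$ then amounts to comparing $4p^{(2,3,1,4)}_1p^{(2,3,1,4)}_4$ with $\bigl(p^{(2,3,1,4)}_2-p^{(2,3,1,4)}_3\bigr)^2$ using the concave curve over its explicitly located windows. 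This is immediate when $\delta=D$ (then $g$ is affine on $[0,2\gamma_2]$, so $f(\v p^{(2,3,1,4)})=4p^{\star}_4\bigl[g(\gamma_1+2\gamma_2)-2g(\gamma_2)\bigr]\ge 0$ since $g$ is nondecreasing), and requires quantitative use of concavity otherwise.

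\emph{Main obstacle.} The delicate point is step (iii): it is the only place where the two competing effects — enlarging $(q_2-q_3)^2$ versus enlarging $4q_1q_4$ — genuinely conflict along the family of vertices, so the qualitative ``leftmost window is biggest'' principle alone does not close the inequality and one must estimate the relevant increments of $g_{\v p}$ quantitatively, using the thermal identity $\gamma_1\gamma_4=\gamma_2^{2}$ to relate the window widths. I also expect to spend some care on the degenerate configurations — $\beta E=0$ (all Gibbs weights equal), flat segments of $g$, and ties in the $\beta$-ordering of $\v p$, where $\v p^{\star}$ is not unique — which should all follow as limiting cases of the generic argument.
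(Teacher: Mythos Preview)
Your reduction to vertices in Step~1 is sound, and the symmetry reduction to orderings with $2$ before $3$ is correct. The fatal gap is Step~2(ii). There you argue only that placing level~$4$ last ``drives $4p^{\v\pi}_1p^{\v\pi}_4$ down'', but you say nothing about the other term $-(p_2-p_3)^{2}$, and in fact the inequality you are implicitly heading towards --- that $f(\v{p}^{\star})\le f(\v{p}^{\v\pi})$ for the remaining $2$-first orderings --- is \emph{false}. The paper exhibits an explicit counterexample right after the statement of the theorem: for $\v{p}=(\epsilon,\,1-\epsilon-\epsilon^{2},\,0,\,\epsilon^{2})$ with $\epsilon\ll e^{-2\beta E}$ the $\beta$-ordering is $(2,1,4,3)$, so $\v{p}$ is itself the vertex $\v{p}^{(2,1,4,3)}$, and a direct computation gives $f(\v{p}^{(2,1,4,3)})<f(\v{p}^{\star})$. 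Both values happen to be negative here, so the theorem survives, but your vertex-minimisation strategy does not: when level~$4$ sits in slot~$3$ and $p_{3}$ is already tiny, moving $4$ to the last slot transfers its population into $p_{3}$, which shrinks $(p_{2}-p_{3})^{2}$ by more than $4p_{1}p_{4}$ shrinks. So Step~(ii) cannot be closed by the ``leftmost window is biggest'' heuristic, and the contrapositive must be argued \emph{conditionally} on $f(\v{p}^{\star})\ge 0$, ordering by ordering, not by an unconditional comparison of $f$-values.

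The paper's proof avoids this trap by a two-stage argument of a rather different shape. First (Lemma~\ref{Lem:extreme-beta-ordering}) it shows only that \emph{some} state in $\mathcal{T}_{+}(\v{p})$ with $\beta$-ordering $(2,1,3,4)$ lies in $\mathbbm{E}$; this is done by explicit population transfers, including the device of mixing with the Gibbs state until $f=0$ exactly and then perturbing --- and the orderings $(2,1,4,3)$ and $(2,3,1,4)$ are handled as separate cases, much as your Step~(iii) does for the latter. Second, a geometric argument (Proposition~\ref{Lem:extreme_conv_points}) upgrades ``some state'' to ``the vertex $\v{p}^{\star}$'': one shrinks the thermal cone towards $\v{\gamma}$ until it is tangent to $\partial\mathbbm{NE}$, and observes that the quadric $f=0$ cannot contain any edge or face of the polytope, so the point of tangency must be an extreme point. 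Your Step~(iii) essentially rediscovers one case of the first stage, but Step~(ii) would need the same kind of conditional, case-by-case analysis rather than the global minimisation you attempt.
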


The full proof is detailed in Appendix~\ref{App:lemma_proof}, and consists in two main steps.
First, we identify $\v{\pi}^{\star}$ as a special $\beta$-ordering for subspace entanglability: whenever $\v{p}\in\mathbbm{TNE}$, there exists a state $\v{q}\in\mathcal{T}_{+}^{\mathrm{ENT}}(\v{p})$ with the ordering $\v{\pi}^{\star}$ (Lemma~\ref{Lem:extreme-beta-ordering}). Next, we use a geometric argument (Proposition~\ref{Lem:extreme_conv_points}) to prove that whenever there exists such $\v{q}$, the extreme point $\v{p}^{\star}$ is also subspace entanglable, i.e. $\v{p}^{\star}\in\mathcal{T}_{+}^{\mathrm{ENT}}(\v{p})$. It is however important to note that Theorem~\ref{lem:criterion_TE} does not imply that the minimum of $f(\v{q})$ over $\v{q}\in \mathcal{T}_{+}(\v{p})$ is always achieved when $\v{q} = \v{p}^{\star} $. For a simple counter-example, consider a state $\v{p} = (\epsilon, 1-\epsilon-\epsilon^{2},0,\epsilon^2)$ with $\epsilon\ll e^{-2\beta  E}$.
The state $\v{p}$ has a $\beta$-ordering $(2,1,4,3)$, yet a smaller $f$ value $f(\v{p})<f(\v{p}^{\star})$ compared to $\v{p}^{\star}$.

\begin{figure}[t]
	\centering
	\includegraphics{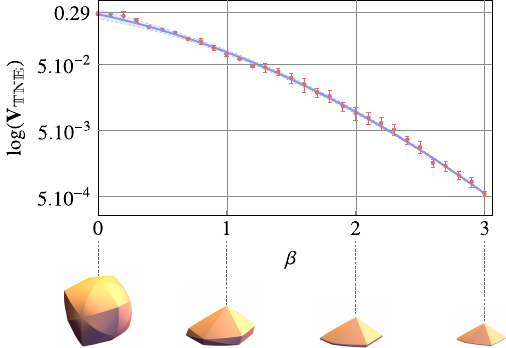}
	\caption{\textbf{Optimal entanglement generation regime.} The upper plot illustrates the approximated log volume of the set of states that cannot become entangled, as a function of  $\beta$. The bottom plot demonstrates how the geometry of this set changes at specific values of $\beta$.}
	\label{Fig:F-minimal-sep-vol}
\end{figure}

The interplay between the temperature $\beta^{-1}$ of the immediate environment and the ability to entangle a system manifests in Theorem~\ref{lem:criterion_TE} through the $\beta$-ordering.
We numerically quantify the volume of the thermally non-entanglable state set $\mathbbm{TNE}$ in Fig.~\ref{Fig:F-minimal-sep-vol}.
For any finite temperature regime $\beta > 0$, determining all states that are thermally non-entanglable is not straightforward.
We devise a computationally efficient algorithm, using Theorem~\ref{lem:criterion_TE} and the vacuum-packing-inspired procedure, that approximates $\mathbbm{TNE}$ for given $\beta$:
\begin{enumerate}[leftmargin=1.5em]\itemsep0em 
	\item Generate a regular grid on the surface of the probability simplex $\Delta_3$.
	\item  Select a point $\v{p}^{(0)}_o$ from the grid and the Gibbs state $\v{p}^{(0)}_i=\boldsymbol{\gamma}$. With this choice, we are guaranteed to have $\v{p}^{(0)}_o\in\mathbbm{TE}$ and $\v{p}^{(0)}_i\in\mathbbm{TNE}$\footnote{Except for a single case $\v{p}^{(0)}_{o} = \frac{1}{1+2e^{-\beta E}}(1,e^{-\beta E},e^{-\beta E},0)$.}.
	\item In the $j$-th step of the procedure we define midpoint $\v{p}^{(j)}_m = \frac{1}{2}\qty(\v{p}^{(j)}_i + \v{p}^{(j)}_o)$ and, based on Theorem~\ref{lem:criterion_TE}, decide whether $\v{p}^{(j)}_m$ is an element from the non-entanglable set. If $\v{p}^{(j)}_m\in\mathbbm{TNE}$, then $\v{p}^{(j+1)}_i = \v{p}^{(j)}_m$; else $\v{p}^{(j+1)}_o = \v{p}^{(j)}_m$.
	\item Repeat the bisection step $n=30$ times, achieving the precision $\norm{\v{p}^{(n)}_i - \v{p}^{(n)}_o} \approx 10^{-9}\norm{\v{p}^{(0)}_i - \v{p}^{(0)}_o}$.
	\item Add the point $\v{p}^{(n)}_m$ to the approximate set $\widetilde{\mathbbm{TNE}}$.
 \item Repeat steps 2-5 for all the gridpoints.
\end{enumerate}
The convex hull of the final set $\operatorname{conv}(\widetilde{\mathbbm{TNE}})$ is a good approximation of the actual non-entanglable set $\mathbbm{TNE}$.

The numerical result in Fig.~\ref{Fig:F-minimal-sep-vol} exhibits a monotonic behaviour, indicating that low temperatures enhance entanglability. 
Indeed, this outcome is intuitively expected.
Thermal processes permit the local bypassing of strict energy-preservation, thereby offering advantage in terms of entanglement generation.
Yet, thermal noise inevitably accompanies thermal processes, impacting the amount of achievable entanglement. 
The negative impact of thermal noise is anticipated to be minimized at lower environmental temperatures.
We confirm this first by observing that the set of non-entanglable states $\mathbbm{TNE}$ vanishes at the zero temperature limit $\beta\rightarrow\infty$. Moreover, we can make a more rigorous observation. 
\begin{lem}
    The set $\mathbbm{TNE}(\infty) = \{(1,0,0,0)\}$, i.e. it contains only the ground state. 
\end{lem}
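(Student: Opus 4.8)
The plan is to show that at zero temperature the only state from which \emph{no} thermal operation can produce entanglement is the ground state $\v p = (1,0,0,0)$. The containment $\{(1,0,0,0)\}\subseteq \mathbbm{TNE}(\infty)$ is immediate: the ground state is a fixed point of every thermal operation at $\beta\to\infty$ (the Gibbs state is $(1,0,0,0)$, which thermomajorizes everything, so the future thermal cone of $(1,0,0,0)$ is just itself), and $(1,0,0,0)$ is separable with $f((1,0,0,0)) = 0 \geq 0$. The real content is the reverse inclusion: every $\v p \neq (1,0,0,0)$ is thermally entanglable at $\beta = \infty$, i.e. $\mathbbm{TNE}(\infty) \subseteq \{(1,0,0,0)\}$.

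The key idea is to invoke Theorem~\ref{lem:criterion_TE}: to prove $\v p \in \mathbbm{TE}$ it suffices to prove $\v p^\star \in \mathbbm{E}$, i.e. $f(\v p^\star) < 0$ where $\v p^\star$ is the extreme point of $\mathcal{T}_+(\v p)$ with $\beta$-ordering $\v \pi^\star = (2,1,3,4)$. So first I would compute $\v p^\star$ explicitly in the limit $\beta \to \infty$. In this limit the Gibbs vector is $\v\gamma = (1,0,0,0)$, which is degenerate, so one should take the limit carefully: work at large but finite $\beta$ with $\v\gamma \propto (1, e^{-\beta E}, e^{-\beta E}, e^{-2\beta E})$, identify the extreme point $\v p^\star$ via the thermomajorization construction (Lemma~\ref{lem_extreme} in the appendix), and then send $\beta\to\infty$. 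The $\beta$-ordering $\v\pi^\star = (2,1,3,4)$ corresponds to reshuffling so that, after dividing by $\v\gamma$, the order of entries is $2,1,3,4$; the associated extreme point is obtained by the standard ``greedy'' Gibbs-rescaled procedure on the Lorenz/thermomajorization curve. I expect that in the $\beta\to\infty$ limit this yields something like $\v p^\star = (p_1, \, p_2 + p_3 + p_4, \, 0, \, 0)$ or a similar collapse — essentially all the weight not pinned to level $1$ gets pushed into level $2$ (the first slot of the $\beta$-ordering among the excited levels), because at zero temperature thermomajorization degenerates and one can freely move population ``downhill'' among the non-ground levels. Plugging into $f(\v q) = 4 q_1 q_4 - (q_2 - q_3)^2$ would then give $f(\v p^\star) = -\,(p_2+p_3+p_4)^2 = -(1-p_1)^2$, which is strictly negative whenever $p_1 < 1$, i.e. whenever $\v p \neq (1,0,0,0)$. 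That closes the argument.

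The main obstacle I anticipate is handling the degeneracy of the Gibbs state at $\beta = \infty$ carefully enough to pin down $\v p^\star$ — the thermomajorization polytope and its extreme points are defined for $\beta < \infty$ where $\v\gamma$ has full support, so one must either take a clean limit of the finite-$\beta$ construction or give a direct argument. A direct argument is actually available and cleaner: at $\beta\to\infty$ every stochastic matrix that fixes $\v\gamma = (1,0,0,0)$ is Gibbs-preserving, and such matrices are exactly those whose first column is $(1,0,0,0)^T$ — i.e. they may do anything on the excited subspace and may only feed population \emph{into} the ground state, not out of it. Hence from $\v p = (p_1, p_2, p_3, p_4)$ one can reach, for instance, $\v q = (p_1, 1 - p_1, 0, 0)$ (dump all excited population into level $2$), and more generally any $(p_1 + t, \, (1-p_1-t)r, \, (1-p_1-t)s, \, (1-p_1-t)u)$ with $t \geq 0$ and $(r,s,u)$ a sub-probability distribution. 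Choosing $\v q = (p_1, 1-p_1, 0, 0)$ gives $f(\v q) = -(1-p_1)^2 < 0$ for $p_1 < 1$, so $\mathcal{T}_+^{\mathrm{ENT}}(\v p) \neq \varnothing$ and $\v p \in \mathbbm{TE}$. I would present this direct route as the proof, and remark that it is consistent with (and a special case of) the picture given by Theorem~\ref{lem:criterion_TE}. One should also double-check the single exceptional grid point flagged in the footnote of the numerical algorithm does not cause trouble here — at $\beta\to\infty$ that point tends to $(1,0,0,0)$ itself, so it is exactly the one genuinely non-entanglable state and the statement is consistent.
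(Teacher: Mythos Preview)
Your direct argument is correct and close in spirit to the paper's, but the two proofs pick different thermal operations. The paper thermalizes \emph{one qubit} locally at zero temperature, sending $\v p$ to $\v q = (p_1+p_3,\,p_2+p_4,\,0,\,0)$ or $(p_1+p_2,\,p_3+p_4,\,0,\,0)$; one of these two always has a nonzero entry in slot $2$ (or $3$) unless $\v p=(1,0,0,0)$, so $f(\v q)<0$. You instead invoke the characterization of zero-temperature Gibbs-preserving stochastic maps (first column $=e_1$) to route \emph{all} excited weight into level $2$, reaching $\v q=(p_1,\,1-p_1,\,0,\,0)$ with $f(\v q)=-(1-p_1)^2<0$ whenever $p_1<1$. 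Your route is marginally cleaner because it avoids the implicit case split over which qubit to thermalize; the paper's route is more physically transparent, since local subsystem thermalization is an explicit thermal operation without appeal to the GP equivalence. Your initial detour through Theorem~\ref{lem:criterion_TE} and the $\beta\to\infty$ limit of $\v p^\star$ is unnecessary --- the paper does not use it here either --- and you are right to set it aside in favour of the direct construction. Your verification of the inclusion $\{(1,0,0,0)\}\subseteq\mathbbm{TNE}(\infty)$ is also fine and slightly more explicit than the paper, which leaves that direction tacit.
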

\begin{proof}
Given any other initial state, it is possible to thermalize one of the subsytems; in the zero-temperature limit ($\beta\rightarrow
\infty$), this transforms a state $\v{p} = (p_{1},p_{2},p_{3},p_{4})$ into $\v{q} = (p_{1}+p_{3},p_{2}+p_{4},0,0)$ or $(p_{1}+p_{2},p_{3}+p_{4},0,0)$. However, in this specific scenario, it is sufficient to assume entanglement because the last two levels are empty while the second level is populated. Consequently, $f$ always becomes negative. Thus, for any state $\v{p} \neq (1,0,0,0)$, the final state $\v{q}$ is subspace entanglable, i.e. $\v{p}\in\mathbbm{TE}$.    
\end{proof}

One may be tempted to view the ground state $(1,0,0,0)$ then as useless for entanglement generation; however, this is not true. In particular, the ground state becomes thermally entanglable whenever $\beta^{-1}\neq0$: one can thermalize one of the subsystems to obtain $\v{q} = \frac{1}{1+e^{-\beta E}}(1,e^{-\beta E},0,0)$, which is subspace entanglable.
What is observed is that as long as there is a certain amount of athermality in the initial state (with respect to environment temperature), we can often trade it for some amount of entanglement. 

Given the above, it is then intuitive to see that given an initial state $\v{p}$, the range of environmental temperature that allows for entanglement generation is very much state-dependent. In particular, we are able to identify critical inverse temperatures $\beta_{C}$, at which a state $\v{p}$ starts to become thermally (non-)entanglable. In fact, we show that $\beta_{C}$, if exists, can be directly calculated for any state by employing Theorem~\ref{lem:criterion_TE} alongside the entanglement constraint in Eq.~\eqref{Eq:2qubit_PPT_pop}. 
Interestingly, for some states two critical temeperatures $\beta_{C_{1}}<\beta_{C_{2}}$ exist and the state becomes entanglable when $\beta<\beta_{C_{1}}$ or $\beta>\beta_{C_{2}}$.

A particularly clear analysis can be done when the initial state themselves are thermal, having a temperature $\beta_{\ms{S}}\neq\beta$ in principle different from the ambient temperature. We report the conclusions here, while directing the curious reader to the full calculation in Appendix~\ref{A:thermal-state}.

Thermal initial states are written as
\begin{equation}
    \v p = \frac{1}{Z_{\ms{S}}}(1, e^{-\beta_{\ms{S}}  E},e^{-\beta_{\ms{S}}  E}, e^{-2\beta_{\ms{S}}  E}),
\end{equation}
where $Z_{\ms{S}} = (1+e^{-\beta_{\ms{S}} E})^{2}$. 
When the system is colder than the environment, $\beta_{\ms S} > \beta$, the $\beta$-ordering of the state is $(1,2,3,4)$.
Finding the explicit form of the extreme point $\v{p}^{\star}$ and solving $f(\v{p}^{\star}) = 0$, we may obtain the critical temperature 
\begin{align}\label{eq:betaC1_explicit}
	\beta_{C_{1}}  \equiv \beta_{\ms S} -E^{-1} \log\left[ 1 + 2e^{-\beta_{\ms S} E}\left(\sqrt{e^{2\beta_{\ms S} E}+1} - 1\right)\right]\ .
\end{align}
Whenever the ambient temperature $\beta^{-1}$ is higher than the critical temperature $\beta_{C_{1}}^{-1}$, the thermal state $\v{p}$ becomes thermally entanglable. 
We can obtain a simple approximation $\beta_{C_{1}}  \simeq \beta_{\ms S} -E^{-1}\log3$ of Eq.~\eqref{eq:betaC1_explicit} up to the zeroth order of $e^{-\beta_{\ms S} E}$ in the low system temperature limit $\beta_{\ms{S}} E\gg 1$.

On the other hand, when the system is hotter than its surrounding, i.e. $\beta_{\ms S} < \beta$, the critical temperature can be obtained from a quartic equation and we do not have an explicit formula for it. 
However, in the same limit  $\beta_{\ms{S}} E\gg 1$, we can approximate $\beta_{C_{2}} \simeq \beta_{\ms S} +E^{-1}\log3$.
In other words, if 
\begin{equation}
    |\beta_S - \beta| \geq E^{-1}\log3,
\end{equation}
then we may operate such a thermal machine to generate entanglement. See Fig.~\ref{fig:thermal_to_entangled} for the numerical values of the critical temperatures, which clearly displays the validity of our approximation $\beta_{C} \simeq \beta_{\ms{S}} \pm E^{-1}\log3$.

As mentioned earlier, the temperature difference between the system and the environment is a resource for entanglement generation, as it is for many other thermodynamic tasks. This process can be conceptualized as a type of heat engine which aims not to output work, but entanglement. For instance, consider a 3-stroke thermal-entanglement engine whose working substance is a two-qubit system. This engine operates between an ambient bath at temperature $\beta^{-1}$ and a working bath with temperature $\beta^{-1}_W:\abs{\beta^{-1}_W - \beta^{-1}_C}>0$. The cycle proceeds as follows: \emph{(i)} the two-qubit system thermalises with the working bath, reaching equilibrium at the working bath's temperature $\beta_W^{-1}$; \emph{(ii)} the system is then brought into contact with the ambient bath, which is at temperature $\beta^{-1}$. During this stage, entanglement is generated within the two-qubit system due to thermal interactions with the ambient bath. To close the cycle, \emph{(iii)} the generated entanglement is used, and the system is returned to its initial state, which is energy-incoherent.

\begin{figure}[t]
	\includegraphics{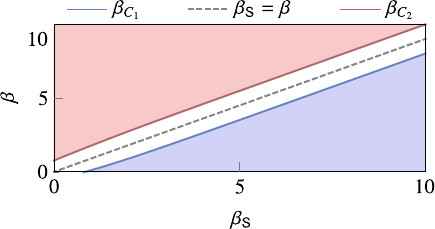}
	\caption{\textbf{Thermal state \& entanglability}. Critical temperatures of an environment for thermal initial states, where the qubit energy gap is set to $E=1$. 
		The $x$-axis indicates the inverse temperature $\beta_{\ms S}$ of the initial state and the $y$-axis marks the ambient inverse temperature $\beta$.
		If the pair $(\beta_{\ms S},\beta)$ is in either blue or red shaded region, the thermal initial state is entanglable. 
		The blue and red lines mark the hot critical (inverse) temperature and cold critical (inverse temperature) for the environment to make the thermal initial state entanglable. 
		}
		\label{fig:thermal_to_entangled}
\end{figure}

\subsection{Quantifiers for achievable entanglement under thermal operations}\label{Sec:quantifying}

In the previous sections, we focused on determining whether a state is entanglable or not using thermal resources. Here, we are additionally interested in the extent of entanglement that thermal operations can produce, and how this varies across different temperature regimes. 
Two figures of merit can be used to quantify this extent: the \emph{thermal volume of entanglement }, $\mathbf{V}^{\mathrm{ENT}}_{\mathrm{TO}}(\v p)$ as defined in Eq.~\eqref{Eq:volume-future-ent-cone} and the maximum negativity~\cite{Vidal2002} that the initial state can produce. In particular, the thermal volume of entanglement 
is a thermodynamic monotone, in other words, $\mathbf{V}^{\mathrm{ENT}}_{\mathrm{TO}}(\v p) \geq \mathbf{V}^{\mathrm{ENT}}_{\mathrm{TO}}(\v q)$ for all $\v{q}\in\mathcal{T}_{+}(\v{p})$. Further insights on this quantity can be developed by considering the geometry of the future thermal cone of entanglement $\mathcal{T}_{+}^{\mathrm{ENT}}(\v{p})$. Recall in Definition \ref{Def:thermal-entangled-cones}, $\mathcal{T}_{+}^{\mathrm{ENT}}(\v{p})$ is the intersection of two sets: $\mathbbm{E}$ and $\mathcal{T}_{+}(\v{p})$. We also know that $\mathbbm{E}$ is non-convex and not connected, hence $\mathcal{T}_{+}^{\mathrm{ENT}}(\v{p})$ is also non-convex and not connected [see bottom panel of Fig.~\eqref{Fig-future-evolution}].
However, as the temperature decreases, two disjoint parts grow in sizes and approach each other, becoming almost connected; this is a natural consequence of the future thermal cone approaching the plane $p_{4} = 0$, which is on the boundary of $\mathbbm{E}$.

\begin{figure}
	\centering
	\includegraphics{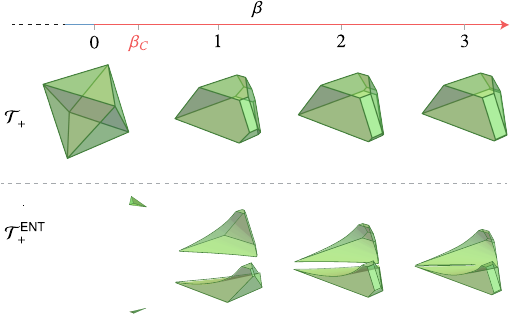}
	\caption{\textbf{Regions of entanglement}. The evolution of the future and entanglement thermal cones as a function of $\beta$ reveals interesting behavior. For a bipartite product state $\v{p} = (0.5, 0.5) \otimes (0.24, 0.76)$ with an identical energy gap $E = 1$, we observe that at $\beta = 0$, the state cannot be entangled via thermal operations. However, for $\beta \geq \beta_C \approx 0.21$, the state becomes entanglable. As $\beta$ increases, the two disjoint parts of $\mathcal{T}^{\mathrm{ENT}}(\v p)$ become closer, until at $\beta \to \infty$, they are separated only by a plane $p_{2}=p_{3}$.}
	\label{Fig-future-evolution}
\end{figure}

The change in volume $\mathbf{V}^{\mathrm{ENT}}_{\mathrm{TO}}(\v p)$ with respect to $\beta$ is, however, heavily dependent on the initial state. 
To see this, consider the energy eigenstates, i.e. four vertices of the probability simplex $\Delta_{4}$ as described explicitly in the caption of Figure \ref{Fig:volume-negativity}.
For the ground state $\v{p}_1$, $\mathbf{V}^{\mathrm{ENT}}_{\mathrm{TO}}(\v{p}_1)$ decreases with $\beta$ [black solid curve, top panel of Fig.~\ref{Fig:volume-negativity}], because the future thermal cone $\mathcal{T}_{+}(\v{p}_1)$ shrinks as the Gibbs state $\gamma$ approaches the ground state as $\beta$ increases. On the other hand, for the maximally excited state $\v{p}_4$, $\mathcal{T}_{+}(\v{p}_4)$ remains constant [red dot curve, top panel of Fig.~\ref{Fig:volume-negativity}], since the future thermal cone encompasses the entire probability simplex, regardless of the ambient temperature. In this case, the future thermal cone of entanglement is precisely $\mathbbm{E}$. 
Finally, for both $\v{p}_2$ and $\v{p}_3$, the thermal volume of entanglement decreases with $\beta$, albeit slower than that of $\v{p}_1$. 

\begin{figure}[t]
	\centering
	\includegraphics{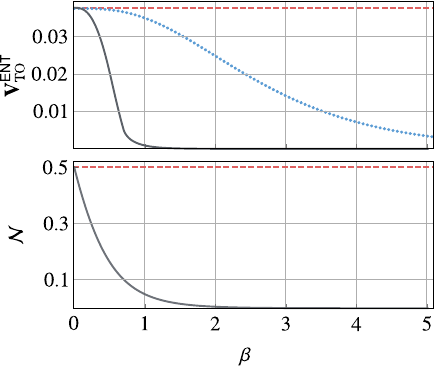}
	\caption{\textbf{Volume and negativity as a function of $\beta$.} For all sharp states $\v{p}_{1}=(1,0,0,0)$, $\v{p}_{2}=(0,1,0,0)$, $\v{p}_{3}=(0,0,1,0)$, and $\v{p}_{4}=(0,0,0,1)$, we plot the thermal volume of entanglement (top), and the largest negativity as a function of the inverse temperature. Among all states, two are distinct: when both qubits start in the excited state $\v{p}_{4}$ and when they start in the ground state (black solid curve) $\v{p}_{1}$. Any other permutation of the initial state characterizes a different active state, and their volume is represented by the blue dotted curve on the top panel. Note that this plot also represents the normalized volume of subspace entanglable states, since the thermal cone of $\v{p}_{4}$ is the full set space. Moreover, although the maximum achievable negativity of $\v{p}_{4}$ is the same as, e.g., $(0,1,0,0)$, it is much more powerful in terms of state preparation, which can be useful since entanglement is in general irreversibly manipulated.
		In the bottom panel, the maximum negativity of these states coincides with the maximally active state.}
	\label{Fig:volume-negativity}
\end{figure}

The second figure of merit is negativity~\cite{Vidal2002}, 
\begin{equation}\label{Eq:negativity}
    \mathcal{N}(\rho) = \frac{\norm{\rho^{T_{\ms A}}}_1 - 1}{2} = \sum_i \frac{|\lambda_i| - \lambda_i}{2},
\end{equation}
where $\lambda_i$ represents all the eigenvalues of $\rho^{T_{\ms A}}$.
Equivalently, $\mathcal{N}(\rho)= \sum_{j}\vert{\lambda}^{-}_{j}\vert$, where ${\lambda}^{-}_{j}$ are negative eigenvalues of $\rho^{T_{\ms A}}$.
Therefore, if the system is qubit-qubit or qubit-qutrit, non-zero negativity is necessary and sufficient for the state being entangled; for any larger systems, there exist entangled states with zero negativity~\cite{horodecki1998mixed, chen2016schmidt}.
In the qubit-qubit system we are examining, given a population vector $\v{q} = \mathrm{eig}(\sigma)$, the maximal negativity can be obtained from Eq.~\eqref{Eq:min-eigenvalues} as
\begin{equation}\label{Eq:negativity-state}
    \mathcal{N}(\sigma) =\frac{1}{2}\qty[\sqrt{(q_1-q_4)^2+(q_2-q_3)^2}-(q_1+q_4)],
\end{equation}
when $f(\v{q})<0$, which has also been expected from Ref.~\cite{Verstraete2001} in a different context.
Observe that
Eq.~\eqref{Eq:negativity-state} is related to Eq.~\eqref{Eq:entanglement-measure} via 
\begin{align}
    \mathcal{N}(\sigma) = \frac{1}{2}\qty[\sqrt{(q_1+q_4)^2-f(\v q)}-(q_1+q_4)],
\end{align}
and monotonically decreases as $f(\v q)$ increases.
For $f(\v{q})\geq0$, the negativity is always zero. 

The maximum negativity over all probability simplex $\Delta_{4}$ occurs when $\v{q} = (0,1,0,0)$ or $\v{q} = (0,0,1,0)$ and the corresponding state are the Bell states. 
If the system is initially in one of the four vertices of $\Delta_{4}$, it is possible to prepare Bell states via thermal operations when $\beta = 0$.
However, as $\beta \geq 0$, the ground state $\v{p}_1$ loses its ability to produce a Bell state, while three other pure states remain capable. 
The maximum negativity that $\v{p}_1$ can produce decreases as $\beta$ increases.

\section{Additional discussions}\label{sec:discussions}
\subsection{Generation of entanglement in the presence of dissipation}
\label{sec:dissipative}

So far, our discussion revolved around the assumption that any energy-preserving system-bath dynamics is accessible. 
However, one can also consider the scenario where the system undergoes an open dynamics described by a Lindblad master equation~\cite{kossakowski1972quantum,gorini1976completely,lindblad1976generators},
	\begin{equation}
		\label{Eq:master_equation}
		\frac{d \rho(t)}{dt}  = -i \qty[H, \rho(t)] + \mathcal{L}_t\qty[\rho(t)],
	\end{equation}
where $[\cdot,\cdot]$ denotes the commutator and $\mathcal{L}_t$ is the Lindbladian having the general form
	\begin{equation}
		\label{eq:lindbladian}
		\mathcal{L}_t(\rho) = \sum_{i} r_i(t) \left[ L_i(t) \rho L_i(t)^\dag - \frac{1}{2}\Bigl\{L_i(t)^\dag L_i(t), \rho\Bigl\}\, \!\right]\!,\!
	\end{equation} 
with $\{\cdot,\cdot\}$ denoting the anticommutator, $L_i(t)$ being jump operators, and $r_i(t)\geq 0$ being non-negative jump rates. 
Furthermore, assuming that the quantum system interacts with a large heat bath implies that the system Gibbs state is a stationary solution of the dynamics, $\mathcal L_t(\gamma) = 0$ and that the Lindbladian $\mathcal L_t$ commutes with the generator of the Hamiltonian dynamics $-i[H,\cdot]$ for all times $t$. 

 \begin{figure}[t]
    \centering
    \includegraphics{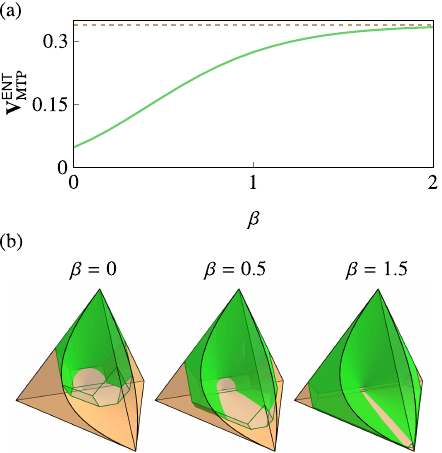}
    \caption{\textbf{Entanglement generation via Markovian thermal process}. The upper plot illustrates the volume of the future thermal cone of entanglement as a function of $\beta$ for a bipartite system $\rho_{\ms AB} = \ketbra{11}{11}$ undergoing a Markovian thermal process, depicted by the solid green curve. The dashed orange line represents the volume of the future thermal cone of entanglement for the same state, but under thermal operations. In the lower plot, the evolution of the Markovian future thermal cone of entanglement (green) with respect to $\beta$ is shown. Here, the orange shape depicts the future thermal cone under TO, while the internal lighter color with green vertices represents the future cone under MTP.}
    \label{Fig:F-MTP}
\end{figure}

The dynamics generated by master equations that satisfy two aforementioned properties lead to what is known as a Markovian thermal process, which is known to be a strict subset of thermal operations~\cite{Lostaglio22_MTO1}. The differences between them arise from our degree of control on the bath. Nonetheless, the entanglement criterion given in Eq.~\eqref{Eq:2qubit_PPT_pop} remains valid for Markovian thermal processes. Consequently, we can study the entanglement generation in the presence of dissipation using the same criterion function $f$. 

The additional restriction stemming from the Markovianity of the Lindblad master equations is the most evident when considering the highest excited state $\v{p} = (0,0,0,1)$ as the initial state. 
As analysed in Section~\hyperref[Sec:quantifying]{V-C}, the volume of the future thermal cone of entanglement $\mathbf{V}^{\mathrm{ENT}}_{\mathrm{TO}}(\v p)$ defined in Eq.~\eqref{Eq:volume-future-ent-cone} remains maximal for all inverse temperatures $\beta$, when we can utilize any thermal operation to generating entanglement.
In Fig.~\hyperref[Fig:F-MTP]{\ref{Fig:F-MTP}a}, we present how the volume $\mathbf{V}^{\mathrm{ENT}}_{\mathrm{TO}}(\v p)$ varies as a function of $\beta$, when only Markovian thermal processes are allowed.
For any $\beta$, the volume $\mathbf{V}^{\mathrm{ENT}}_{\mathrm{TO}}(\v p)> 0$ indicates that the entanglement generation via dissipative dynamics is possible. 
However, the difference between the entanglement generation via thermal operations is the most pronounced in the high-temperature limit, i.e. when $\beta$ is small.
Furthermore, in the limit $\beta\rightarrow\infty$, Markovian thermal processes are as capable as thermal operations for generating entanglement. The detailed behaviour of the future thermal cone of entanglement under MTP is depicted in Fig.~\hyperref[Fig:F-MTP]{\ref{Fig:F-MTP}b}.

\subsection{Remarks on higher-dimensional systems}
\label{sec:multiparty}

The entanglement structure for multipartite systems with more than two parties is notoriously complex. 
Multipartite entanglement can be classified using equivalence classes defined with invertible local transformations. Two states are considered to have the same class of entanglement if one can be transformed into the other via LOCC with a non-zero probability~\cite{Cirac2000}. Even in the simplest three-qubit systems, one encounters two distinct classes of entanglement: GHZ-like entanglement and W-like entanglement~\cite{Cirac2000,Sanpera2001}, represented by the states
\begin{align}
    \ket{\mathrm{GHZ}} = \frac{\ket{000} + \ket{111}}{\sqrt{2}},\ \ 
    \ket{\mathrm{W}} = \frac{\ket{001} + \ket{010} + \ket{100}}{\sqrt{3}}.
\end{align}

The $n$-qubit extensions of two classes exist. In particular, $n$-qubit GHZ states are defined as
\begin{equation}
\ket{\mathrm{GHZ}_n} = \frac{1}{\sqrt{2}}\left( \underbrace{\ket{0\hdots0}}_{n} + \underbrace{\ket{1\hdots1}}_{n} \right),
\end{equation}
an equal superpositions of all $n$ qubits being in the ground state and in the excited state. The concept of W-like entanglement is similarly generalized to $n$-qubit Dicke states
\begin{equation}
\ket{D_{m}} = \binom{n}{m}^{-1/2}\sum_{\sigma\in\mathcal{S}_{n}}P_\sigma \qty(\underbrace{\ket{0\hdots0}}_{n-m}\otimes\underbrace{\ket{1\hdots1}}_{m})\ ,
\end{equation}
where the sum goes over all permutations $\sigma$ on $n$ elements and $P_\sigma$ are the respective matricial representations.
These states form a basis for the symmetric subspace of an $n$-qubit system. 

Remarkably, different energy level structures give rise to different entanglement structures that can be achieved using thermal operations. Let us assume that the Hamiltonian of the $i$-th qubit is $H_i = E_i\ketbraa{1}{1}$. First, if all the energy gaps are equal, i.e. $E_i = E$, then the $n$-qubit Hilbert space is divided into energy-degenerate subspaces:
\begin{equation}
\mathcal{H} = \bigoplus_{i=0}^n \mathcal{V}_{i E},
\end{equation}
each of which spanned by states with a fixed number of excitations $i$ and thus the energy equal to $i\cdot E$. 
With this energy-level structure, Dicke states $\ket{D_i}$, or W-like entanglement in the three-qubit case, can be generated via thermal operations with nonzero components within the energy-degenerate subspaces.

On the other hand, if we set $E_n = \sum_{i=1}^{n-1} E_i$, we find that the degenerate subspace corresponding to this energy
\begin{equation}
    \mathcal{V}_{E_n} = \mathrm{span}\qty(
    |\underbrace{1\hdots1}_{n-1}0\rangle,\,
    |\underbrace{0\hdots0}_{n-1}1\rangle)
\end{equation}
contains states possessing GHZ-like entanglement. Specifically, in the case of three qubits, this configuration leads to GHZ states up to a NOT operation on the third qubit. Similar arguments can be easily extended to systems with larger local dimensions. 

Concerning systems with larger local dimensionality, we note that PPT criterion, based on negativity of partial transpose of a state, is a complete entanglement measure not only for two-qubit systems, but also for qubit-qutrit systems as well \cite{Horodecki1996}. In particular, we may consider a system in an energy-incoherent state $\rho$ described by population $\v p=\qty(p_1,\hdots,p_6)$ and subject to local hamiltonians $H_1 = \op{1}$ and \mbox{$H_2 = \op{1} + 2\op{2}$}. Based on similar arguments as for the qubit-qubit systems, if any of the following two conditions,
\begin{equation}
    \begin{aligned}
        4\,p_1 \min(p_4,p_5) & < (p_2-p_3)^2, \\
        4p_6 \min(p_2,p_3) & < (p_4-p_5)^2,
    \end{aligned}
\end{equation}
is satisfied, one of the eigenvalues of the partial transpose $\rho^{T_A}$ can become negative under energy-preserving unitary operations, and therefore, the state is entanglable. Similar expressions can be easily put forward for qubit-qudit systems.

It should be highlighted, however, that in general, complete entanglement measures for multipartite systems beyond $2\times2$ and $2\times3$ do not exist, unlike the one based on the PPT criterion, which is effective for two-qubit and two-qutrit  cases \cite{Peres1996, Horodecki1996}, which is reflected in the appearance of so-called bound entanglement \cite{Gaida_bound}. 
Nevertheless, there exist partial entanglement measures, such as the 3-tangle for three-qubit systems~\cite{Wootters2000}, can be used to provide an outer approximation of the entanglable and non-entanglable sets in such scenarios.

\section{Outlook} \label{Sec:outlook}

In this paper, we have investigated the interplay between entanglement and thermodynamics. 
The minimal thermodynamic assumptions we made, namely that the system-bath is closed and evolves via an energy-preserving unitary, constrain the final entanglement achievable from given initial states. By applying the PPT criterion and constructing future thermal cones, we derived necessary and sufficient conditions for two-qubit initial states that can be transformed into entangled states within thermal operations framework. 
It turns out that any separable state, apart from a cluster of states around the thermal state, can become entangled.
In other words, our investigation demonstrates that entanglement generation is possible, given certain degree of athermality in the initial state.
Furthermore, we captured how the entanglement generating capability varies with the ambient temperature, by analysing the volume of entanglable initial state set, and found that low temperature allows more states to become entanglable.
Nevertheless, when the initial state is fixed, we discovered that lowering temperature is not always beneficial and that more detailed considerations on the geometry of thermal cones are needed. 

There are several promising avenues for extending and generalizing the results presented in this paper. A concrete and direct generalization consists of using an auxiliary system, a catalyst, to facilitate a process that would not occur spontaneously; see recent reviews~\cite{datta2022catalysis, Patrykreview}. The idea of using a catalyst in our context arises from Ref.~\cite{lipkabartosik2023fundamental}, where the authors demonstrate that catalysis enables the creation of otherwise inaccessible types of correlations by accessing those locked in non-degenerate energy subspaces. As we have discussed, the entanglement generation capacity depends on the structure of degenerate energy subspaces. Employing a catalyst not only increases the number of degenerate energy subspaces but also lift some of the dynamical restrictions imposed by energy conservation. In this regard, the initial steps demonstrating the utility of generating entanglement via thermal operations with the aid of a strict catalyst were presented in Ref.~\cite{catalysiskuba2024}. Furthermore, in Appendix~\ref{App:catalytic-advantages}, we provide an example where correlated catalysis enables the generation of entanglement from an initial state that cannot be entangled without a catalyst. This raises the intriguing question of how one can systematically study the generation of entanglement with the assistance of a catalyst.

In this work, we only considered entanglement generation from initially energy-incoherent states.
When we focus on the entanglement generated between degenerate energy levels, coherence between different energy levels cannot contribute.
Nevertheless, coherence can still nontrivially affect the entanglement, including those between energy levels with different energy values.
Currently, tools and methods available for analysing coherent state transformations under thermal operations are rather limited~\cite{Korzekwa2017,LostaglioKorzekwaCoherencePRX, Gour_2018}. 
For two-qubit systems, existing techniques to characterize single qubit coherent state transformations~\cite{LostaglioKorzekwaCoherencePRX,Korzekwa2017} may be adapted to study the entanglement generation.
This direction of study will extend our results to the realm where three resources -- entanglement, athermality, and coherence -- compete and cooperate with each other.

We make a few final remarks on existing literature, to which our work can connect. 
First, it would be interesting to study the conventional quantifiers of entanglement such as entanglement cost and distillable entanglement~\cite{horodecki2009quantum} which are originally defined via LOCC, and evaluate them with regard to thermodynamic constraints instead. In particular, the gap between these quantities would then quantify the amount of irreversibility when it comes to the local preparation of entanglement.
Lastly, entanglement in continuous-variable (CV) systems is relatively less explored.  
Extending our analysis to CV systems would be extremely fruitful, considering their widespread use in practical experiments~\cite{andersen2010continuous} involving optomechanics~\cite{Rocheleau2009, Teufel2011}, and superconducting setups~\cite{PhysRevA.88.032317}.
For instance, we may use the toolkits developed in the study of Gaussian thermal operations~\cite{Serafini_2020} for an analysis similar to what have been done in the current work, and aim to explicate CV entanglement results, such as the recent no-go result~\cite{Longstaff2023} for the CV steady-state entanglement generation.

\begin{acknowledgments}

We thank Kamil Korzekwa for valuable discussions that significantly contributed to the execution of this project, and also Patryk Lipka-Bartosik for very fruitful comments on the first version of the manuscript. AOJ acknowledges financial support from VILLUM FONDEN through a research grant (40864).
JS and NN are supported by the start-up grant for Nanyang Assistant Professorship of Nanyang Technological University, Singapore, awarded to Nelly H.~Y.~Ng.
JCz acknowledges financial support by NCN PRELUDIUM BIS no. DEC-2019/35/O/ST2/01049.
\end{acknowledgments}

\bibliography{Bibliography}

\newpage
\appendix
\onecolumngrid

\section{Future thermal cone}\label{App:thermomajorisation}
\begin{figure*}
    \centering
    \includegraphics{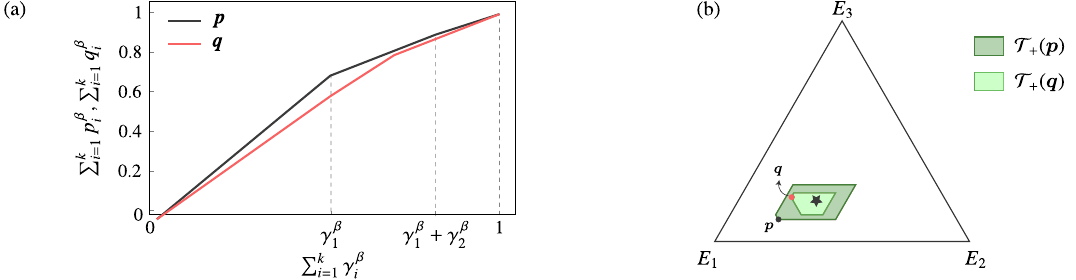}
    \caption{\textbf{Thermomajorisation $\&$ future thermal cone}. For two  states $\v p = (0.7, 0.2, 0.1)$ and $\v q = (0.6, 0.2, 0.2)$ with $\beta = 0.5$, (a) we plot their thermomajorization curves $\mathcal{L}_{\v p}(x)$ and $\mathcal{L}_{\v q}(x)$, respectively. Since $\v p$ thermomajorizes $\v q$, it implies that one can transform $\v p$ into $\v q$ under thermal operations. This can also be observed by examining their (b) future thermal cones, where $\v q$ lies within the future thermal cone of $\v p$. The Gibbs state $\v{\gamma}$ is depicted by a black star $\star$. 
    }
    \label{F:app-example}
\end{figure*}

We briefly review state transition rules under thermal operations and the notion of the future thermal cone, a set of state that can be transformed from a given initial state with thermal operations. 
First, we introduce thermomajorization relations, which fully characterize the convertibility between two states via thermal operations.
The thermomajorization curves, associated to each energy-incoherent state, are defined as the following: 
\begin{defn}[Thermomajorization curve]\label{Def:thermomajorisation-curve} 
    Suppose that the Gibbs state $\v{\gamma}\in\Delta_{d}$ with respect to the system Hamiltonian $H$ and the ambient inverse temperature $\beta$ is given. 
    For each state $\v p \in \Delta_d$, we define the thermomajorisation curve of $\v{p}$ as a piecewise-linear function $\mathcal{L}_{\v p}: [0,1] \to [0,1]$ composed of linear segments connecting the point $(0,0)$ and the elbow points $\{(\sum_{i=1}^k\gamma_{\pi_{\v{p}}^{-1}(i)},~\sum_{i=1}^k p_{\pi_{\v{p}}^{-1}(i)})\}_{k=1}^{d}$. 
    Here $\pi_{\v{p}}$ is the $\beta$-ordering of $\v{p}$, which imposes $p_{\pi_{\v p}^{-1}(i)}/\gamma_{\pi_{\v p}^{-1}(i)} \geq p_{\pi_{\v p}^{-1}(j)}/\gamma_{\pi_{\v p}^{-1}(j)}$ for all $i\leq j$.
\end{defn}
Note that thermomajorization curves are always concave and non-decreasing by definition. 
For the Gibbs state $\v{\gamma}$, the thermomajorization curve $\mathcal{L}_{\v{\gamma}}$ is a straight line from $(0,0)$ to $(1,1)$.
From the concavity of the curve and the fact that any $\mathcal{L}_{\v{p}}$ passes points $(0,0)$ and $(1,1)$, any thermomajorization curve is above the curve for the Gibbs state. i.e. $\mathcal{L}_{\v{p}}(x)\geq\mathcal{L}_{\v{\gamma}}(x)$ for all $x\in[0,1]$.
See Fig.~\ref{F:app-example} (a) for examples of thermomajorization curves. 

Using thermomajorization curves, thermomajorization relations between two states can be established.
\begin{defn}[Thermomajorization relation]
    The thermomajorization relation is a preorder between two energy-incoherent states.
    If two states $\v{p},\v{q}\in\Delta_{d}$ have corresponding thermomajorization curves $\mathcal{L}_{\v p}$ and $\mathcal{L}_{\v q}$, such that $\mathcal{L}_{\v p}(x) \geq \mathcal{L}_{\v q}(x)$ for all $x \in [0,1]$, we say $\v{p}$ thermomajorizes $\v{q}$, or more simply,  $\v p \succ_{\beta} \v q$.
\end{defn}
Since the thermomajorization relation gives a preorder, not a total order, there exist pairs of states $\v{p}$ and $\v{q}$ such that neither $\v{p}\succ_{\beta}\v{q}$ nor $\v{q}\succ_{\beta}\v{p}$ holds.
Such pairs are dubbed incomparable. 
Reamrkably, thermomajorization relations coincide with the state convertibility under thermal operations.
\begin{thm}[Theorem~2 of Ref.~\cite{horodecki2013fundamental}]\label{thm:thermomajorisation}
    Suppose that $\v{p},\v{q}\in\Delta_{d}$ describe two energy-incoherent states of a system with Hamiltonian $H$ with access to thermal environments with the inverse temperature $\beta$.
    If $\v{p}\succ_{\beta}\v{q}$, than $\v{p}$ can be transformed to $\v{q}$ via some thermal operation.
\end{thm}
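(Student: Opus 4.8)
The plan is to reduce thermomajorization to ordinary majorization on an enlarged, energy-degenerate system, where a majorizing transition is implemented by a convex mixture of level permutations, and then to show that each such permutation descends to an energy-preserving unitary on the original system coupled to a suitably engineered heat bath --- that is, to a thermal operation.

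First I would pass, by a standard approximation argument, to the case in which all the Gibbs weights $g_i := e^{-\beta E_i}$ are rational; clearing denominators, write $g_i = d_i$ with $d_i \in \mathbbm{N}$ and $Z = \sum_i d_i =: D$, so $\gamma_i = d_i/D$. Embed the $d$-level system into a $D$-level system with trivial (fully degenerate) Hamiltonian by replacing the $i$-th level with a $d_i$-dimensional block, and map a state $\v p$ to the ``flattened'' vector $\tilde{\v p}\in\Delta_D$ that spreads $p_i$ uniformly over its $d_i$ copies. A short computation shows that the thermomajorization curve $\mathcal{L}_{\v p}$ of $\v p$ with respect to $\v\gamma$ coincides with the Lorenz (majorization) curve of $\tilde{\v p}$: the $\beta$-ordering of $\v p$ is exactly the order of decreasing entries of $\tilde{\v p}$, and both curves are piecewise linear with breakpoints at the same abscissae $m/D$ where they take the same values. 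Hence $\v p \succ_\beta \v q$ is equivalent to $\tilde{\v p} \succ \tilde{\v q}$ in the ordinary majorization order. By the Hardy--Littlewood--P\'olya theorem this yields a doubly stochastic $B$ with $\tilde{\v q} = B\tilde{\v p}$, and by the Birkhoff--von Neumann theorem $B = \sum_\alpha \lambda_\alpha P_\alpha$ is a convex combination of permutations $P_\alpha$ of the $D$ levels.

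Next I would turn each permutation $P_\alpha$ --- which generally moves weight between blocks sitting at different original energies --- into a thermal operation on the original system. Since the embedded Hamiltonian is flat, $P_\alpha$ is energy-preserving there; to realize it on the original non-degenerate system one attaches a heat bath whose level degeneracies are chosen so that (i) its Gibbs state, paired appropriately with the system, is maximally mixed on the energy shells used by the transition --- so that the system Gibbs state is preserved --- and (ii) an energy-preserving unitary that permutes within those shells, after tracing out the bath, acts on the system populations exactly as $P_\alpha$ does in the embedded picture; energy conservation is possible precisely because a move $i \to j$ is matched by a bath jump of energy $E_j - E_i$, and the $d_i$-proportional degeneracies make the Gibbs weights balance. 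The convex mixture $\sum_\alpha \lambda_\alpha$ is then obtained either from the convexity of the set of thermal operations or by absorbing a classical control register (drawn from a rational approximation of its own Gibbs state) into the bath. Composing, one obtains a single thermal operation acting on populations as $\v p \mapsto \v q$; removing the rational approximation of the Gibbs weights by a limit finishes the proof.

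The step I expect to be the main obstacle is the bath engineering in the previous paragraph together with the limiting argument surrounding it. One must (a) build the bath so that a single energy-preserving permutation on the enlarged degenerate shell reproduces the target map \emph{exactly} after partial trace --- which forces the Gibbs weights, and possibly the convex weights $\lambda_\alpha$ themselves, to be rational, and requires care because the joint state $\rho\otimes\gamma_{\ms R}$ is not confined to a single energy shell, so the argument must be organized shell by shell (or in a large-bath limit); and (b) conclude that the sequence of transformations obtained from finer and finer rational approximations converges to the desired transformation and not merely to within $\epsilon$ --- this rests on the fact that the set of states reachable from $\v p$ under thermal operations is closed, so that $\epsilon$-reachability for every $\epsilon>0$ implies exact reachability. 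The remaining ingredients --- Hardy--Littlewood--P\'olya, Birkhoff--von Neumann, and the identification of intra-shell permutations with energy-preserving unitaries --- are classical and I would invoke them without elaboration.
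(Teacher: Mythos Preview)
The paper does not prove this theorem; it is stated in Appendix~\ref{App:thermomajorisation} purely as a quotation of Theorem~2 of Ref.~\cite{horodecki2013fundamental}, with no argument supplied. There is therefore nothing in the paper to compare your proposal against.

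That said, your strategy is essentially the one used in the original reference: pass (via rational approximation of the Gibbs weights) to an enlarged fully-degenerate system on which thermomajorization becomes ordinary majorization, invoke Hardy--Littlewood--P\'olya and Birkhoff--von Neumann to write the transition as a mixture of permutations, and then realize each permutation by an energy-preserving unitary on system-plus-bath where the bath supplies the matching energy jumps and degeneracies. The obstacles you flag --- the shell-by-shell bath engineering and the closure argument needed to remove the rational approximation --- are the genuine technical points, and your identification of them is accurate. The argument goes through; the one thing worth tightening is that the closure of the reachable set under thermal operations (needed for the limit) is itself a nontrivial fact that you should either cite or justify, since the set of thermal operations is defined via an existential over baths of unbounded dimension.
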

In particular, the Gibbs state $\v{\gamma}$ is thermomajorized by any other state and thus any state can be converted to the Gibbs state. 
Conversely, the most excited state with $\v{p} = (0,\cdots,0,1)$ thermomajorizes all other states in $\Delta_{d}$ and any state can be converted from it. 

Theorem~\ref{thm:thermomajorisation} provides an efficient way of checking whether a transition from one state to another is feasible. 
However, it is also possible to characterize the entire set of states that can be attained from an initial state via thermal operations. 
We refer to such set with an initial state $\v{p}$ as the future thermal cone of $\v{p}$ and denote it with a symbol $\mathcal{T}_{+}(\v{p})$.
\begin{lem}[Lemma 12 of Ref.~\cite{Lostaglio2018elementarythermal}]
    \label{lem_extreme}
    Suppose $\v{p}\in\Delta_{d}$. 
    The future thermal cone $\mathcal{T}_{+}(\v{p})$ is a convex polytope with at most $d!$ extreme points each corresponding to permutations $\v \pi\in \S_d$.
    The extreme point $\v{p}^{\v{\pi}}$ corresponding to the permutation $\v{\pi}$ can be identified by first constructing its following thermomajorization curve $\mathcal{L}_{\v{p}^{\v{\pi}}}$:
    \begin{enumerate}
        \item The $x$-coordinates of the $\mathcal{L}_{\v{p}^{\v{\pi}}}$ elbow points are $x_k = \sum_{i=1}^k\gamma_{\pi^{-1}(i)}$ for $k \in\left\{1,\dots,d\right\}$.
        \item The $y$-coordinates of the $\mathcal{L}_{\v{p}^{\v{\pi}}}$ elbow points are $y_{k} = \mathcal{L}_{\v{p}}(x_k)$ for $k \in\left\{1,\dots,d\right\}$.
    \end{enumerate}
	Therefore, $\mathcal{L}_{\v{p}^{\v{\pi}}}$ is the piecewise-linear curve that connects $(0,0)$ and elbow points $\{(x_k,y_k)\}_{k=1}^{d}$, which all lie on the curve $\mathcal{L}_{\v{p}}$. 
\end{lem}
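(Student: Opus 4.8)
The plan is to realize $\mathcal{T}_{+}(\v{p})$ as the solution set of a finite linear system, then read off its vertices. Throughout I use the equivalence (noted in the main text, and the converse of Theorem~\ref{thm:thermomajorisation}, from Ref.~\cite{horodecki2013fundamental}) that $\v{q}\in\mathcal{T}_{+}(\v{p})$ iff $\v{p}\succ_{\beta}\v{q}$, i.e. $\mathcal{L}_{\v{p}}(x)\ge\mathcal{L}_{\v{q}}(x)$ for all $x\in[0,1]$.

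The first step is a linear-programming description of the thermomajorization curve: for every $\v{q}\in\Delta_d$ and $x\in[0,1]$,
\[
\mathcal{L}_{\v{q}}(x)=\max\Big\{\textstyle\sum_i s_i q_i\ :\ 0\le s_i\le 1,\ \sum_i s_i\gamma_i=x\Big\},
\]
which one checks by a fractional-knapsack/greedy argument: filling $s_i$ in decreasing order of $q_i/\gamma_i$ reproduces exactly the elbow construction of Definition~\ref{Def:thermomajorisation-curve}. Writing $\Gamma_S=\sum_{i\in S}\gamma_i$, two facts follow. (i) For fixed $x$, $\v{q}\mapsto\mathcal{L}_{\v{q}}(x)$ is a maximum of linear functionals, hence convex and piecewise linear. (ii) The breakpoints of $x\mapsto\mathcal{L}_{\v{q}}(x)$ all lie in the finite set $\{\Gamma_S:S\subseteq\{1,\dots,d\}\}$, and $\mathcal{L}_{\v{q}}(\Gamma_S)\ge\sum_{i\in S}q_i$, with equality when $S$ is a prefix of the $\beta$-ordering of $\v{q}$. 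Since $\mathcal{L}_{\v{p}}$ is concave piecewise linear and $\mathcal{L}_{\v{q}}$ is concave, on each interval between consecutive elbows of $\mathcal{L}_{\v{q}}$ the difference $\mathcal{L}_{\v{p}}-\mathcal{L}_{\v{q}}$ is concave, hence nonnegative as soon as it is nonnegative at the two endpoints. Combining these, $\v{p}\succ_{\beta}\v{q}$ is equivalent to the finite system $\sum_{i\in S}q_i\le\mathcal{L}_{\v{p}}(\Gamma_S)$ for all $S\subseteq\{1,\dots,d\}$ (which subsumes $\sum_i q_i=1$, the case $S=\{1,\dots,d\}$, as $\mathcal{L}_{\v{p}}(1)=1$) together with $q_i\ge0$. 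Thus $\mathcal{T}_{+}(\v{p})$ is a bounded polytope.

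Next I would verify each prescribed $\v{p}^{\v{\pi}}$ is a vertex. Fix $\pi\in\S_d$, set $T_k=\{\pi^{-1}(1),\dots,\pi^{-1}(k)\}$, $x_k=\Gamma_{T_k}$, $x_0=0$, and define $\v{p}^{\v{\pi}}$ by $p^{\v{\pi}}_{\pi^{-1}(k)}=\mathcal{L}_{\v{p}}(x_k)-\mathcal{L}_{\v{p}}(x_{k-1})$. Monotonicity and concavity of $\mathcal{L}_{\v{p}}$ make these entries nonnegative with non-increasing ratios to $\gamma_{\pi^{-1}(k)}$, so $\v{p}^{\v{\pi}}\in\Delta_d$ has $\beta$-ordering $\pi$ and thermomajorization curve equal to the chordwise interpolation of $\mathcal{L}_{\v{p}}$ through the points $(x_k,\mathcal{L}_{\v{p}}(x_k))$---the curve in the statement---which lies weakly below $\mathcal{L}_{\v{p}}$ by concavity, so $\v{p}^{\v{\pi}}\in\mathcal{T}_{+}(\v{p})$ and it saturates the inequalities indexed by the chain $T_1\subsetneq\cdots\subsetneq T_d$. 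If $\v{p}^{\v{\pi}}=\tfrac12(\v{a}+\v{b})$ with $\v{a},\v{b}\in\mathcal{T}_{+}(\v{p})$, then for every $k$,
\[
\textstyle\sum_{i\in T_k}p^{\v{\pi}}_i=\tfrac12\big(\sum_{i\in T_k}a_i+\sum_{i\in T_k}b_i\big)\le\tfrac12\big(\mathcal{L}_{\v{p}}(x_k)+\mathcal{L}_{\v{p}}(x_k)\big)=\sum_{i\in T_k}p^{\v{\pi}}_i,
\]
so all inequalities are equalities; taking successive differences over $T_k\setminus T_{k-1}$ yields $\v{a}=\v{b}=\v{p}^{\v{\pi}}$, hence $\v{p}^{\v{\pi}}$ is extremal.

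Finally I would show these are the only vertices and bound their number. The clean route is to note that $S\mapsto\mathcal{L}_{\v{p}}(\Gamma_S)$ is a (real-valued) polymatroid rank function---monotone, zero at $\emptyset$, and submodular as the composition of the concave $\mathcal{L}_{\v{p}}$ with the additive set function $S\mapsto\Gamma_S$---so the system above presents $\mathcal{T}_{+}(\v{p})$ as the base polytope of that polymatroid (using $\mathcal{L}_{\v{p}}(\Gamma_{\{1,\dots,d\}})=1$). By the classical greedy characterization of polymatroid base polytopes, every vertex is the greedy vector of some linear order, and the greedy vector of the order $\pi$ is exactly the $\v{p}^{\v{\pi}}$ constructed above; since distinct permutations inducing the same $\beta$-ordering (ties in the ratios, i.e. straight segments of $\mathcal{L}_{\v{p}}$) give the same vertex, there are at most $d!$ of them. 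Equivalently, one can avoid the polymatroid language by a direct exchange argument showing that at a vertex the tight constraints $\sum_{i\in S}q_i=\mathcal{L}_{\v{p}}(\Gamma_S)$ may be taken to form a maximal chain $\emptyset\subsetneq S_1\subsetneq\cdots\subsetneq S_d$, after which the resulting triangular linear system has the unique solution $\v{p}^{\v{\pi}}$ for the corresponding $\pi$. I expect this last step---justifying the reduction of the tight constraint family to a chain, or invoking the polymatroid structure cleanly---to be the main obstacle; everything else is concavity bookkeeping on thermomajorization curves.
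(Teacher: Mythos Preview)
The paper does not actually prove this lemma: it is stated in Appendix~\ref{App:thermomajorisation} as a quotation of Lemma~12 from Ref.~\cite{Lostaglio2018elementarythermal}, with no proof supplied, and is then used to derive Theorem~\ref{thm_futurefinite} as an immediate corollary. So there is no ``paper's own proof'' against which to check your proposal.

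That said, your argument is correct and is a clean, self-contained route to the result. The key observation that $S\mapsto\mathcal{L}_{\v{p}}(\Gamma_S)$ is submodular (concave $\mathcal{L}_{\v{p}}$ composed with the modular $S\mapsto\Gamma_S$) is exactly what identifies $\mathcal{T}_{+}(\v{p})$ with a polymatroid base polytope, after which Edmonds' greedy characterization delivers both the list of vertices and the $d!$ bound in one stroke. Your direct verification that each $\v{p}^{\v{\pi}}$ is extremal, via saturating the chain $T_1\subsetneq\cdots\subsetneq T_d$ and arguing that any convex decomposition must also saturate it, is also fine and independent of the polymatroid language. The step you flag as the potential obstacle---reducing the tight constraints at a vertex to a maximal chain---is precisely the content of the uncrossing/exchange lemma for submodular systems, so either the polymatroid citation or the exchange argument closes it; this is standard and not a gap. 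Compared to the treatment in Ref.~\cite{Lostaglio2018elementarythermal}, which works more directly with thermomajorization curves and explicit $\beta$-swap constructions, your approach is more structural: it explains \emph{why} the extreme points are indexed by orderings rather than simply exhibiting them.
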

The thermomajorization relation between the initial state $\v{p}$ and extreme states $\v{p}^{\v{\pi}}$ are also known as tight-thermomajorization, since all elbow points of $\mathcal{L}_{\v{p}^{\v{\pi}}}$ are on the curbe $\mathcal{L}_{\v{p}}$.
Furthermore, by construction, each state $\v{p}^{\v{\pi}}$ has the $\beta$-ordering $\v{\pi}$.
Lemma~\ref{lem_extreme} enables us to characterize the future thermal cone of $\v p$ by constructing states $\v p^{\,\v \pi}$ for each $\v \pi \in \mathcal{S}_{d}$, and by taking their convex hull. Here, we provide the construction of $\T_{+}(\v p)$ as a simple corollary of Lemma~\ref{lem_extreme}:
\begin{thm}[Future thermal cone]
\label{thm_futurefinite}
The future thermal cone of a $d$-dimensional energy-incoherent state $\v p$ is given by
\begin{equation}
\T_{+}(\v p) = \operatorname{conv}[\{\v p^{\v \pi}, \v \pi\in\S_d\}] \, .
\end{equation}
\end{thm}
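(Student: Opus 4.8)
The plan is to obtain Theorem~\ref{thm_futurefinite} as a direct consequence of Lemma~\ref{lem_extreme}. Lemma~\ref{lem_extreme} already asserts that $\mathcal{T}_{+}(\v p)$ is a convex polytope whose extreme points form a subset of $\{\v p^{\v\pi} : \v\pi \in \mathcal{S}_d\}$. Since any convex polytope equals the convex hull of its extreme points, it suffices to (i) check that each $\v p^{\v\pi}$ actually lies in $\mathcal{T}_{+}(\v p)$, i.e. is a genuine achievable state, and (ii) check that every achievable state lies in the convex hull of these candidates. Both follow from the thermomajorization characterization (Theorem~\ref{thm:thermomajorisation}), so the proof is essentially a matter of assembling statements already available in the excerpt.

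First I would argue the inclusion $\operatorname{conv}[\{\v p^{\v\pi}\}] \subseteq \mathcal{T}_{+}(\v p)$. For each $\v\pi$, the curve $\mathcal{L}_{\v p^{\v\pi}}$ is constructed in Lemma~\ref{lem_extreme} by taking elbow points that all lie \emph{on} the concave curve $\mathcal{L}_{\v p}$; since $\mathcal{L}_{\v p^{\v\pi}}$ is the piecewise-linear interpolation of points on a concave curve, it lies weakly below $\mathcal{L}_{\v p}$ everywhere, hence $\v p \succ_{\beta} \v p^{\v\pi}$, and by Theorem~\ref{thm:thermomajorisation} there is a thermal operation achieving $\v p \to \v p^{\v\pi}$. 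Thus $\v p^{\v\pi} \in \mathcal{T}_{+}(\v p)$. Because $\mathcal{T}_{+}(\v p)$ is convex (thermal operations are closed under probabilistic mixing / convex combination via a shared ancilla register), the whole convex hull is contained in it.

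Next I would argue the reverse inclusion $\mathcal{T}_{+}(\v p) \subseteq \operatorname{conv}[\{\v p^{\v\pi}\}]$. Take any $\v q \in \mathcal{T}_{+}(\v p)$, so $\v p \succ_{\beta} \v q$, meaning $\mathcal{L}_{\v q}(x) \leq \mathcal{L}_{\v p}(x)$ for all $x$, while also $\mathcal{L}_{\v q}(x) \geq \mathcal{L}_{\v\gamma}(x)$. Let $\v\pi$ be the $\beta$-ordering of $\v q$; then $\mathcal{L}_{\v q}$ has its elbows at $x$-coordinates $x_k = \sum_{i=1}^k \gamma_{\pi^{-1}(i)}$. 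The extreme point $\v p^{\v\pi}$ has elbows at the \emph{same} $x$-coordinates with $y$-values $\mathcal{L}_{\v p}(x_k) \geq \mathcal{L}_{\v q}(x_k)$, so $\v p^{\v\pi} \succ_{\beta} \v q$; moreover one checks that $\v q$ can be written as a convex combination of the $\v p^{\v\pi}$'s sharing that ordering region, using that $\mathcal{L}_{\v q}$ is a concave curve sandwiched between $\mathcal{L}_{\v p}$ and $\mathcal{L}_{\v\gamma}$ with elbows at the fixed grid $\{x_k\}$ — this is exactly the statement that the achievable states with a fixed $\beta$-order form the simplex-like region bounded by $\v p^{\v\pi}$ and the Gibbs-direction faces. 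Assembling over all orderings $\v\pi$ gives $\v q \in \operatorname{conv}[\{\v p^{\v\pi}\}]$.

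The main obstacle is the second inclusion: making rigorous the claim that an arbitrary $\v q$ thermomajorized by $\v p$ decomposes as a convex combination of the finitely many tight-thermomajorized extreme points. The cleanest route is to avoid re-deriving this and instead invoke Lemma~\ref{lem_extreme} as a black box — it already states $\mathcal{T}_{+}(\v p)$ \emph{is} a polytope with extreme points among $\{\v p^{\v\pi}\}$ — so that the corollary reduces to the elementary fact (Minkowski/Krein–Milman in finite dimensions) that a compact convex polytope is the convex hull of its extreme points, combined with step one showing each candidate $\v p^{\v\pi}$ is genuinely attainable (so none of the named points is spurious, though spurious points would in any case not enlarge the hull). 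I would therefore write the proof as: ``By Lemma~\ref{lem_extreme}, $\mathcal{T}_{+}(\v p)$ is a convex polytope and $\mathrm{ext}[\mathcal{T}_{+}(\v p)] \subseteq \{\v p^{\v\pi}\}_{\v\pi \in \mathcal{S}_d}$; conversely each $\v p^{\v\pi}$ satisfies $\v p \succ_\beta \v p^{\v\pi}$ by tight-thermomajorization and hence lies in $\mathcal{T}_{+}(\v p)$; therefore $\mathcal{T}_{+}(\v p) = \operatorname{conv}[\mathrm{ext}[\mathcal{T}_{+}(\v p)]] \subseteq \operatorname{conv}[\{\v p^{\v\pi}\}] \subseteq \mathcal{T}_{+}(\v p)$, forcing equality.''
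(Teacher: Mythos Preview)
Your proposal is correct and matches the paper's approach: the paper presents Theorem~\ref{thm_futurefinite} as a ``simple corollary'' of Lemma~\ref{lem_extreme} without further argument, and your write-up simply makes that corollary explicit by combining the polytope/extreme-point statement of Lemma~\ref{lem_extreme} with the observation that each $\v p^{\v\pi}$ lies in $\mathcal{T}_{+}(\v p)$ by tight-thermomajorization. Your final one-line summary is exactly the intended proof.
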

See Fig.~\ref{F:app-example} (b) for example future thermal cones of two three-level states. In this example, $\v{p}\succ_{\beta}\v{q}$ and $\T_{+}(\v{q})\subset\T_{+}(\v{p})$; these two relations are indeed equivalent in general. The equivalence directly follows from the transitivity of the preorder $\succ_{\beta}$, i.e. if $\v{p}\succ_{\beta}\v{q}$ and $\v{q}\succ_{\beta}\v{r}$, it is always true that $\v{p}\succ_{\beta}\v{r}$; hence, $\v{r}\in\T_{+}(\v{q})$ implies $\v{r}\in\T_{+}(\v{p})$.

It is worth mentioning that $\mathcal{T}^{\,\beta}_{+}(\v p)$ can also be constructed by identifying the entire set of extremal Gibbs-preserving stochastic matrices, as discussed in~\cite{Lostaglio2018elementarythermal,mazurek2018decomposability,mazurek2019thermal}. However, if we are only interested in the future thermal cones, following Lemma~\ref{lem_extreme} and Theorem~\ref{thm_futurefinite} is more efficient.

\section{Convexity of the subspace non-entanglable set $\mathbbm{NE}$}\label{App:convexity-nonentaglable}

\begin{figure*}
    \centering
    \includegraphics{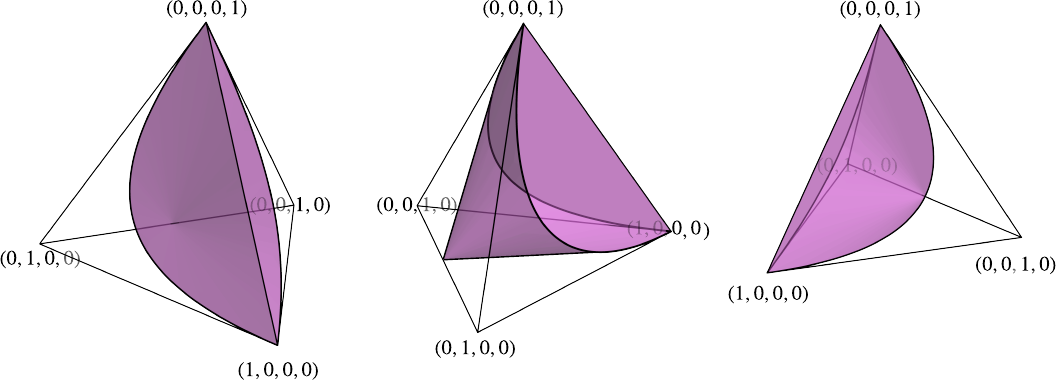}
    \caption{\textbf{Subspace non-entanglable set $\mathbbm{NE}$}. Geometry of the set of qubit-qubit states that cannot get entangled using energy-preserving unitaries. The extreme points of this set include the trivial ones, given by $(1,0,0,0), (0,0,0,1)$ and $(0,\nicefrac{1}{2},\nicefrac{1}{2},0)$, as well as a continuous family of boundary points specified by the solution to the quadratic equation $-(p_2-p_3)^2 +4p_1 (1-p_1-p_2-p_3) =0$ when solved for $p_3$.  }
    \label{Fig-non-entanglable-set}
\end{figure*}

For a bipartite two-qubit state, we now show that the subspace non-entanglable set $\mathbbm{NE}$ is convex. 
The set $\mathbbm{NE}$ is defined (in Definition~\ref{def:SNS}) as the set of states satisfying $f(\v{p})\geq0$, where $f$ is a function given in Eq.~\eqref{Eq:entanglement-measure}:
\begin{equation}\label{Eq:function-f}
 f(\v p)= 4p_1(1-p_1-p_2-p_3)-
 (p_2-p_3)^2.
\end{equation}
To show that the set of states defined by $f(\v p) \geq 0$ is convex, we need to demonstrate that the function $f$ is, counterintuitively, concave. Our task is to verify that for any two arbitrary points $P_1 = (x_1, y_1, z_1)$ and $P_2 = (x_2, y_2, z_2)$, the following condition holds:
\begin{equation}\label{Eq:f-convexity}
    f[\lambda(x_1,y_1,z_1) + (1-\lambda)(x_2,y_2,z_3)] \geq \lambda f(x_1,y_1,z_1) + (1-\lambda)f(x_2,y_2,z_2),
\end{equation}
for $\lambda \in [0,1]$. To make the calculation less cumbersome, we can prove convexity by considering two arbitrary boundary points. To do this, we first find the roots of Eq.~\eqref{Eq:function-f} with respect to $p_3$:
\begin{equation}\label{Eq:f-solutions}
     f(p_1,p_2,p_3)= -p^2_3+2p_3(p_2-2p_1)-[p_2(4p_1+p_2)-4p_1(1-p_1)]=0,
\end{equation}
whose solutions are simply given by
\begin{align}
    p^{(\pm)}_3 = -2p_1+ p_2\pm2\sqrt{p_1-2p_1 p_2}, \label{Eq:solution-a}
\end{align}
By focusing on the boundary conditions, we can simplify the analysis concerning convexity. Specifically, when considering two arbitrary boundary points, the right-hand side of Eq.~\eqref{Eq:f-convexity} is zero. This reduces the problem to verifying whether the left-hand side of this equation is less than or equal to zero. Given that there are two solutions for Eq.~\eqref{Eq:function-f}, we need to demonstrate that all possible combinations of boundary points $P_i = (x_1, y_1, p^{(\pm)}_3)$ for $i \{1,2\}$, the following inequalities are satisfied:
\begin{align}
    f[\lambda(x_1,y_1,p^{(\pm)}_3) + (1-\lambda)(x_2,y_2,p^{(\pm)}_3)] &\geq 0, \label{Eq:inequality-1}\\
    f[\lambda(x_1,y_1,p^{(\pm)}_3) + (1-\lambda)(x_2,y_2,p^{(\mp)}_3)] &\geq 0,\label{Eq:inequality-2}
\end{align}
Furthermore, we must impose the constrain $y_i \leq 1/2$ for $i\in\{1,2\}$ to ensure that the values within the square root in \eqref{Eq:solution-a} are real. By rearranging and rewriting the inequalities above, we observe that both inequalities in \eqref{Eq:inequality-1}, as well as inequalities \eqref{Eq:inequality-2}, are equivalent. Consequently, our analysis boils down to demonstrating the validity of the following two inequalities:
\begin{align}
    4\lambda(1-\lambda)[x_1(1-2y_2)+x_2(1-2y_1)\pm2\sqrt{(x_1-2x_1 y_1)(x_2-2x_2 y_2)}] &\geq 0.
\end{align}
Since the function $4\lambda(1-\lambda)$ is concave and positive on $\lambda\in[0,1]$ interval, we only need to prove that the second term is positive. By setting $a = \sqrt{x_1(1-2y_2)}$ and $b = \sqrt{x_2(1-2y_1}$ we see that the entire expression is reduced to
\begin{align}
    4\lambda(1-\lambda)(a\pm b)^2 &\geq 0,
\end{align}
which completes the proof of concavity of function $f$.

\section{Proof of Theorem~\ref{lem:criterion_TE}}\label{App:lemma_proof}
We first establish the existence of a subspace entanglable $\beta$-ordering $(2,1,3,4)$ state for all thermally entanglable states. 
Yet, the subspace entanglability of the extreme point $\v{p}^{\star}$ is not guaranteed from Lemma~\ref{Lem:extreme-beta-ordering} yet.
\begin{lem}[Critical $\beta$-ordering]\label{Lem:extreme-beta-ordering}
	For any state $ \v p \in \mathbbm{TE}$, there exists a state $ \v q $, such that $ \v q\in  \mathcal{T}^{\mathrm{ENT}}_{+}(\v p) $ and the $ \beta $-order $ \v{\pi}_{\v q} = (2,1,3,4) $.
\end{lem}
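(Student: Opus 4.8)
\textbf{Proof plan for Lemma~\ref{Lem:extreme-beta-ordering}.}

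The plan is to start from the definition: since $\v p\in\mathbbm{TE}$, there exists \emph{some} state $\v r\in\mathcal{T}_+^{\mathrm{ENT}}(\v p)$, i.e. $\v r\in\mathcal{T}_+(\v p)$ with $f(\v r)<0$. Writing $\v r=(r_1,r_2,r_3,r_4)$, the entanglement condition $4r_1r_4<(r_2-r_3)^2$ forces $r_1r_4$ to be small and $|r_2-r_3|$ to be comparatively large; in particular $r_2\neq r_3$. The goal is to massage $\v r$, while staying inside $\mathcal{T}_+(\v p)$ and keeping $f<0$, until its $\beta$-ordering becomes $\v\pi^\star=(2,1,3,4)$. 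The key structural fact I would lean on is that $f$ in Eq.~\eqref{Eq:entanglement-measure} depends on the populations only through $r_1r_4$ and $(r_2-r_3)^2$, so it is invariant under swapping levels $2\leftrightarrow3$ and under swapping $1\leftrightarrow4$. Hence the only thing that matters for $f$ is which of the ``outer'' pair $\{1,4\}$ carries more probability and which of the ``inner'' pair $\{2,3\}$ does, not their labels.

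First I would argue that among states in $\mathcal{T}_+(\v p)$ with $f<0$, one can reach one whose $\beta$-ordering has the two inner levels $\{2,3\}$ ranked so that the larger-weight inner level comes first and the smaller one second — this is the ``$(\cdot,\cdot)$'' slots in $(2,1,3,4)$ occupied by $2$ then $3$. Because thermomajorization and hence $\mathcal{T}_+(\v p)$ is permutation-friendly only through the $\beta$-ordering, and because moving to the tight-thermomajorization extreme point $\v r^{\v\pi}$ for a chosen ordering $\v\pi$ stays in $\mathcal{T}_+(\v p)$ (Lemma~\ref{lem_extreme}), the real content is: show that reordering to $\v\pi^\star$ does not destroy the inequality $f<0$. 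The natural way is to take $\v r$, pass to the extreme point $\v q\vcentcolon=\v p^{\v\pi^\star}$, and compare $f(\v q)$ with $f(\v r)$ using the geometry of thermomajorization curves — the extreme point $\v p^{\v\pi^\star}$ is, among all states in $\mathcal{T}_+(\v p)$ with $\beta$-ordering $\v\pi^\star$, the one that is ``largest'' in the appropriate sense, so it should have the most negative (or at least non-positive) value of $f$ whenever \emph{any} state with that ordering has $f<0$. Concretely I would: (i) show that $f<0$ somewhere in $\mathcal{T}_+(\v p)$ implies $f<0$ at some extreme point of $\mathcal{T}_+(\v p)$, using that $f$ is concave (Appendix~\ref{App:convexity-nonentaglable}) so its minimum over the polytope $\mathcal{T}_+(\v p)$ is attained at a vertex $\v p^{\v\pi}$; (ii) show that the minimizing vertex can be taken to have ordering $\v\pi^\star$, by checking that the vertices obtained from $\v\pi$ and from $\v\pi$ composed with the $2\leftrightarrow3$ or $1\leftrightarrow4$ transpositions either coincide or are related in a way that cannot make $f$ smaller at an ordering other than $\v\pi^\star$; and (iii) conclude $\v q=\v p^{\v\pi^\star}$ works.

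The delicate point — and what I expect to be the main obstacle — is step (ii): among the (up to) $4!$ vertices of $\mathcal{T}_+(\v p)$, one must pin down that the ordering placing $2$ before $1$ (excited-over-ground among the outer levels) and $3$ before $4$ is the one that can simultaneously make $r_1r_4$ small and $(r_2-r_3)^2$ large. The comparison is subtle because changing the $\beta$-ordering changes \emph{which} populations sit where along the thermomajorization curve: the two candidate orderings $(2,1,3,4)$ and $(2,1,4,3)$ give different $r_3,r_4$ even though $f$ only sees $r_1r_4$ and $(r_2-r_3)^2$, and the counterexample flagged right after the theorem statement ($\v p=(\epsilon,1-\epsilon-\epsilon^2,0,\epsilon^2)$) shows the global minimizer of $f$ need \emph{not} have ordering $\v\pi^\star$. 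So the claim is only that \emph{some} $\mathcal T_+^{\mathrm{ENT}}$ state has ordering $\v\pi^\star$, not that $\v p^\star$ minimizes $f$; I would therefore prove it by a direct continuity/interpolation argument — start at the extreme point realizing the true minimum of $f$, and move along an edge of $\mathcal{T}_+(\v p)$ (or a sequence of edges) toward $\v p^\star$, checking that $f$ stays negative along the path because each intermediate vertex still has small outer-product and large inner-difference. Establishing that last monotonicity along the path, using only the concavity of $\mathcal L_{\v p}$ and the placement rules of Lemma~\ref{lem_extreme}, is the crux of the argument.
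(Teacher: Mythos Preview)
Your proposal identifies the right target and some useful structure (the $2\leftrightarrow3$ and $1\leftrightarrow4$ symmetry of $f$, concavity of $f$), but the crux you yourself flag---step~(ii)---is a genuine gap, and the fallback ``edge-path'' argument does not close it. Concavity of $f$ tells you the \emph{minimum} over the polytope $\mathcal{T}_+(\v p)$ sits at a vertex, but it does \emph{not} tell you that $f$ stays negative along a path between two vertices: on a segment, a concave function lies \emph{above} the chord, so both endpoints being negative is compatible with the interior being positive. Hence ``move along edges from the minimizing vertex toward $\v p^{\star}$, checking $f$ stays negative'' has no mechanism behind it. You also drift toward proving $f(\v p^{\star})<0$ directly, which is Theorem~\ref{lem:criterion_TE} rather than the present lemma; in the paper's architecture Theorem~\ref{lem:criterion_TE} is deduced \emph{from} this lemma via a separate geometric argument (Proposition~\ref{Lem:extreme_conv_points}), so aiming for the extreme point here would be circular.

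The paper's proof is constructive and does not rely on the vertex structure of $\mathcal{T}_+(\v p)$ at all. It first reduces to $\v p\in\mathbbm{E}$ and $p_2\geq p_3$, then exhibits explicit thermal moves that monotonically decrease $f$: pushing population from levels $1,3,4$ into level $2$ forces $(\pi_{\v p})_1=2$; a $\beta$-swap between levels $1$ and $4$ forces $p_1e^{-2\beta E}\geq p_4$. This leaves only the orderings $(2,3,1,4)$, $(2,1,4,3)$, $(2,1,3,4)$. The remaining two cases are handled by a trick you did not anticipate: mix with the Gibbs state until $f=0$ exactly, then make a small population transfer ($\delta$ between levels $1,3$ or $3,4$) that lands in the $(2,1,3,4)$ chamber, and verify by direct algebra---using the identity $4p'_1p'_4=(p'_2-p'_3)^2$ at the boundary---that $f(\v q)$ is strictly negative to first order in $\delta$. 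This ``go to the boundary, then perturb'' step is the missing idea in your plan; without it, there is no control on $f$ during the transition between $\beta$-orderings.
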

\begin{proof}
	Suppose that the Lemma is true for all $\v{p}\in\mathbbm{E}$.
	Then for any state $\v{r}\in\mathbbm{TE}$, there exists $\v{p}\in\mathbbm{E}$ and $\v{q}\in\mathcal{T}_{+}^\mathrm{ENT}(\v{p})\subset\mathcal{T}_{+}^\mathrm{ENT}(\v{r})$ with $\v{\pi}_{\v{q}} = (2,1,3,4)$.
	Hence, it is sufficient to prove the Lemma for $\v{p}\in\mathbbm{E}$.

	Without loss of generality, assume that $ p_{2}\geq p_{3}$.
	From the assumption $\v{p}\in\mathbbm{E}$, we have $f(\v p)<0$ for $f$ defined in Eq.~\eqref{Eq:entanglement-measure}. 
	Now, consider another state $ \v r $ such that $ \v r \in \mathcal{T}_{+}(\v p) $ and 
	\begin{align}\label{eq:rstate_2max}
		\v r = \left(p_{1}-\delta_{1},p_{2} +\sum_{i=1,3,4}\delta_{i},p_{3} - \delta_{3},p_{4} - \delta_{4}\right) ,
	\end{align} 
	for some $ \delta_{1},\delta_{3},\delta_{4} \geq 0$.
	We can observe that $ f(\v r) $ is decreasing monotonically as $ \delta_{1}, \delta_{3}, \delta_{4} $ increases. 
	If $ (\v{\pi}_{\v p})_{1} \neq 2 $, there always exists a state $\v{r}$ in the form of Eq.~\eqref{eq:rstate_2max} and $(\v{\pi}_{\v{r}})_{1} = 2$.
	From the monotonicity of $f(\v{r})$, this state $\v{r}\in\mathcal{T}^{\mathrm{ENT}}_{+}(\v p)$.
	We prove the rest of the Lemma by showing there exists $\v{q}\in\mathcal{T}^{\mathrm{ENT}}_{+}(\v r)$ and use $\mathcal{T}^{\mathrm{ENT}}_{+}(\v r)\subset\mathcal{T}^{\mathrm{ENT}}_{+}(\v p)$.
	Hence, we assume that $(\v{\pi}_{\v{p}})_{1} = 2$.

	Now, suppose that the state $ \v p $ has $ p_{4} \geq p_{1}e^{-2\beta E} $. 
	Then a state $ \v r $ such that $ r_{2} =p_{2} $, $ r_{3} = p_{3} $, and 
	\begin{align}
		r_{1} = \left(1-e^{-2\beta E}\right)p_{1} + p_{4},\quad r_{4} = e^{-2\beta E}p_{1},
	\end{align}
	is in $ \mathcal{T}_{+}(\v p) $ and satisfies $  r_{1}e^{-2\beta E}\geq r_{4} $.
	From direct calculation,
	\begin{align}
		f(\v r) = f(\v p) - 4\left(1-e^{-2\beta E}\right)p_{1}\left(p_{4} - e^{-2\beta E}p_{1}\right)\leq f(\v p),
	\end{align}
	i.e. $ \v r \in \mathcal{T}^{\mathrm{ENT}}_+(\v p) $ whenever $\v{p}\in\mathbbm{E}$.
	
	Therefore, one can always find $ \v q\in  \mathcal{T}^{\mathrm{ENT}}_{+}(\v p)$ whose $ \beta $-ordering is either \emph{(i)} $ \pi_{\v q} = (2,3,1,4) $, \emph{(ii)} $ \pi_{\v q} = (2,1,4,3) $, or \emph{(iii)} $ \pi_{\v q} = (2,1,3,4) $.
	We will show that cases \emph{(i)} and \emph{(ii)} again imply the existence of $ \v q\in  \mathcal{T}^{\mathrm{ENT}}_{+}(\v p)$ corresponding to the case \emph{(iii)}, which proves the Lemma.
	
	Case \emph{(i)}: $ \pi_{\v p} = (2,3,1,4) $ with a strict ordering $p_{3}>p_{1}e^{-\beta E}$. 
	It is always possible to find $ \v p' \in \mathcal{T}_{+}(\v p) $, such that $ \pi_{\v p'} = (2,3,1,4) $ and $ f(\v p') = 0 $ by a convex combination $\v p' = (1-\lambda)\v p + \lambda \v \gamma$, where $\lambda\in[0,1]$ and $\gamma$ is the Gibbs state.
	Now assume that $ \v q \in \mathcal{T}_{+}(\v p') $ is a state with $ \pi_{\v q} = (2,1,3,4) $ and
	\begin{align}
		\v q = \left(p'_{1} + \delta ,p'_{2},p'_{3} - \delta,p'_{4}\right),
	\end{align}
	with some $ \delta> 0 $.
	Such a state always exists since $p'_{3} > p'_{1}e^{-\beta E}$.
	The entanglement witness gives
	\begin{align}
		f(\v q) &=  4\left(p'_{1}  +\delta\right)\left(p'_{4}\right)-\left(p'_{2}-p'_{3} + \delta\right)^{2}\nonumber\\
		&= -\delta^{2} + 2\left(p'_{3}+2p'_{4}-p'_{2}\right)\delta,\label{eq:Nq_1st_exp}
	\end{align}
	where we used $ f(\v p') = 0 $ for the second equality.
	If $p'_{1} = 0$, we have $p'_{4}=0$ from the $\beta$-order of $\v p'$ and $p'_{2} = p'_{3}$ from $f(\v p') = 0$; hence, $f(\v q) < 0$.
	When $p'_{1}\neq 0$, we use $ p'_{4} = (p'_{2} - p'_{3})^{2}/(4p'_{1}) $ to get
	\begin{align}
		f(\v q) = -\delta^{2} - 2\left(p'_{2}-p'_{3}\right)\left(1-\frac{p'_{2}-p'_{3}}{2p'_{1}}\right)\delta.
	\end{align}
	Recall $ 4(p'_{1})^{2} \geq 4p'_{1}p'_{4} = (p'_{2}-p'_{3})^{2} $ implying $ 1-\frac{p'_{2}-p'_{3}}{2p'_{1}} \geq 0 $ and thus $ f(\v q)< 0 $. 
	
	Case \emph{(ii)}: $ \pi_{\v p} = (2,1,4,3) $. There exists $ \v p' \in \mathcal{T}_{+}(\v p) $, such that $ \pi_{\v p'} = (2,1,4,3) $ and $ f(\v p') = 0 $ and also $ \v q \in \mathcal{T}_{+}(\v p') $, such that 
	\begin{align}
		\v q = \left(p'_{1} ,p'_{2},p'_{3} + \delta,p'_{4} - \delta\right),
	\end{align}
	with $ \pi_{\v q} = (2,1,3,4) $.
	Observe that
	\begin{align}
		f(\v q) &=  4\left(p'_{1} \right)\left(p'_{4} - \delta\right)-\left(p'_{2}-p'_{3} - \delta\right)^{2}\nonumber\\
		&= -\delta^{2} - 2\left(2p'_{1} - p'_{2} + p'_{3}\right)\delta < 0
	\end{align}
	for any $ \delta>0 $.
\end{proof}

Now we use geometric arguments to prove the remaining part of Theorem~\ref{lem:criterion_TE}.
To do this, we define generalized boundary sets. 
\begin{defn}[$d$-dimensional boundaries]
	Let $X$ be a convex set and $C_{d}(X)$ be a set of all $(d+1)$-dimensional convex subsets $Y\subset X$.
	$d$-dimensional boundary of $X$ is defined as
 \begin{align}\label{eq:generalized_Bd_def}
		\Bd_{d}(X) \coloneqq \left\{x\, \big| \, \not\exists Y\in C_{d}(X),\ \text{such that}\ x\in \mathrm{int}(Y) \right\}.
	\end{align}
	Intuitively, $\Bd_{d}(X)$ cannot be written as a convex combination of $d+2$ extreme points of $X$ with nonzero weights; hence,	$\Bd_{0}(X)$ is a set of extreme points of $X$, while $\Bd_{1}(X)$ is a set of points of the edges of $X$.
	Furthermore, if $X$ is $D$-dimensional, $\Bd_{D-1}(X)$ reduces to the conventional boundary, $ \partial X$ 
\end{defn}
Note that $\Bd_{d}(X)\subset\Bd_{d+1}(X)$ for any $d\geq 0$: if it is impossible write a state $x$ as a convex combination of any $d+2$ extreme points, it is also impossible to write it as a convex combination of any $d+3$ extreme points. 

Using this definition, we establish a helpful proposition.
\begin{prop}\label{Lem:extreme_conv_points}
	Suppose that a convex set $A$ is a subset of a convex set $B$ and their boundaries interesect, i.e. $A\subset B$ and $\partial A \cap \partial B \neq \emptyset$. 
	Then, for all $d$-dimensional boundaries,
	\begin{align}\label{eq:lem_exconvpoints_eq1}
		A \cap \Bd_{d}(B) \subset \Bd_{d}(A).
	\end{align}
	If we have additional condition $[\Bd_{d}(A)\setminus\Bd_{d-1}(A)] \cap \Bd_{d}(B) = \emptyset$, for some $d$, we find
	\begin{align}\label{eq:lem_exconvpoints_eq2}
		A \cap \Bd_d(B) \subset \Bd_{d-1}(A).
	\end{align}
\end{prop}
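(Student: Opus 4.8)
The plan is to argue straight from the definition in Eq.~\eqref{eq:generalized_Bd_def}, reading ``$x\notin\Bd_d(X)$'' as: there is a $(d+1)$-dimensional convex set $Y\subseteq X$ with $x$ in the relative interior $\mathrm{int}(Y)$ (interior taken inside $\mathrm{aff}(Y)$). The first inclusion~\eqref{eq:lem_exconvpoints_eq1} is then a monotonicity statement, which I would prove by contraposition. Fix $x\in A\cap\Bd_d(B)$ and suppose, toward a contradiction, that $x\notin\Bd_d(A)$. Then there is a $(d+1)$-dimensional convex $Y\subseteq A$ with $x\in\mathrm{int}(Y)$; since $A\subseteq B$, this same $Y$ is also a $(d+1)$-dimensional convex subset of $B$ with $x\in\mathrm{int}(Y)$, so $x\notin\Bd_d(B)$, contradicting $x\in\Bd_d(B)$. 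Hence every $x\in A\cap\Bd_d(B)$ lies in $\Bd_d(A)$. Note that the boundary-contact hypothesis $\partial A\cap\partial B\neq\varnothing$ is not needed for this step; geometrically it simply singles out the regime in which the conclusion is non-vacuous, since otherwise $A$ lies in $\mathrm{relint}(B)$ and $A\cap\Bd_d(B)=\varnothing$ for the relevant range of $d$.

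For the sharper inclusion~\eqref{eq:lem_exconvpoints_eq2}, I would combine~\eqref{eq:lem_exconvpoints_eq1} with the filtration $\Bd_{d-1}(X)\subseteq\Bd_d(X)$ recorded just above the proposition. Put $S:=A\cap\Bd_d(B)$. By~\eqref{eq:lem_exconvpoints_eq1}, $S\subseteq\Bd_d(A)$, and trivially $S\subseteq\Bd_d(B)$, hence $S\subseteq\Bd_d(A)\cap\Bd_d(B)$. Using $\Bd_{d-1}(A)\subseteq\Bd_d(A)$, decompose
\[
\Bd_d(A)\cap\Bd_d(B)=\bigl(\Bd_{d-1}(A)\cap\Bd_d(B)\bigr)\cup\bigl((\Bd_d(A)\setminus\Bd_{d-1}(A))\cap\Bd_d(B)\bigr).
\]
The additional hypothesis says exactly that the second term vanishes, so $S\subseteq\Bd_{d-1}(A)\cap\Bd_d(B)\subseteq\Bd_{d-1}(A)$, which is~\eqref{eq:lem_exconvpoints_eq2}.

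I do not expect a genuine obstacle: once the definition is unpacked, the proposition is essentially set-theoretic. The two points that need care are (i) that a $(d+1)$-dimensional convex subset of $A$ is, verbatim, a $(d+1)$-dimensional convex subset of $B$, so that a witness ruling out membership in $\Bd_d(A)$ transfers unchanged to $\Bd_d(B)$; and (ii) that the containments $\Bd_{d-1}\subseteq\Bd_d$ are available, so that the layered decomposition above is legitimate. If one preferred instead to argue through the intuitive picture (``$x$ is a strictly positive convex combination of $d+2$ extreme points of $X$''), one would first have to reconcile that picture with Eq.~\eqref{eq:generalized_Bd_def} for sets $A$ that are not polytopes; staying with the convex-subset formulation avoids this, so that is the route I would take.
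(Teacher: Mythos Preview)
Your argument is correct and follows essentially the same route as the paper: the first inclusion is proved by contraposition, transferring a $(d{+}1)$-dimensional convex witness from $A$ to $B$ via $A\subseteq B$, and the second by combining Eq.~\eqref{eq:lem_exconvpoints_eq1} with the filtration $\Bd_{d-1}(A)\subseteq\Bd_d(A)$ and the extra hypothesis. Your remark that the hypothesis $\partial A\cap\partial B\neq\varnothing$ is not actually used, and your care to stick with the convex-subset formulation rather than the extreme-point heuristic, are both well taken.
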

\begin{proof}
	For the first part of the proposition, suppose that there exists a point $x\notin\Bd_{d}(A)$ such that $x\in A \cap \Bd_{d}(B)$.
	Then there exists a convex subset $Y$ of $A$ having $d+2$ extreme points and containing $x$ in $\mathrm{int}(Y)$.
	Yet, this set $Y$ is also a convex subset of $B$ having the same property, which contradicts $x\in A \cap \Bd_{d}(B)$.
	
	The second part of the proposition follows directly from combining Eq.~\eqref{eq:lem_exconvpoints_eq1} and the additional assumption $[\Bd_{d}(A)\setminus\Bd_{d-1}(A)] \cap \Bd_{d}(B) = \emptyset$.
\end{proof}

Now we proceed to prove Theorem~\ref{lem:criterion_TE}, which states that $\v{p}^{\star}\in\mathbbm{E}$ whenever $\v{p}\in\mathbbm{TE}$.
First, consider mixtures of the initial state $\v{p}$ and the Gibbs state $\v{\gamma}$ that can be written as $\v{p}^{(x)} = (1-x)\v{p} + x\v{\gamma}$ for some $x\in[0,1]$. 
Future thermal cones between these states have a hierarchy $\mathcal{T}_{+}(\v{p}^{(y)})\subset\mathcal{T}_{+}(\v{p}^{(x)})$ whenever $y\geq x$.
Then there exists a coefficient $a\in(0,1)$, such that 
\begin{align}
	\mathcal{T}_{+}\left(\v{p}^{(a)}\right) \subset \mathbbm{NE},\quad \partial\left(\mathcal{T}_{+}\left(\v{p}^{(a)}\right)\right) \cap \partial\left(\mathbbm{NE}\right) \neq \emptyset,
\end{align}
since $	\mathcal{T}_{+}(\v{p}) \not\subset \mathbbm{NE}$, while $	\mathcal{T}_{+}(\v{\gamma}) \subset \mathrm{int}(\mathbbm{NE})$.
The Gibbs state $\v{\gamma}$ is always full-rank and so is $\v{p}^{(a)}$, i.e. $\v{p}^{(a)}\in\mathrm{int}(\Delta_{4})$ and $\mathcal{T}_{+}(\v{p})\subset\mathrm{int}(\Delta_{4})$.

Now we show that $[\Bd_{1}(\mathcal{T}_{+}(\v{p}^{(a)}))\setminus\Bd_{0}(\mathcal{T}_{+}(\v{p}^{(a)}))]\cap \partial (\mathbbm{NE}) = \emptyset$.
Let $\v{q}$ be a state on the edge of $\mathcal{T}_{+}(\v{p}^{(a)})$ but not on the vertex of it, i.e. $\v{q}\in[\Bd_{1}(\mathcal{T}_{+}(\v{p}^{(a)}))\setminus\Bd_{0}(\mathcal{T}_{+}(\v{p}^{(a)}))]$.
Since $\mathbbm{NE}$ is convex, if $\v{q}\in\partial(\mathbbm{NE})$, the entire edge containing $\v{q}$ should be included within $\partial(\mathbbm{NE})$.
Recall that $\mathcal{T}_{+}(\v{p}^{(a)})$ is a polyhedron whose edges are defined as segments of lines on which two of the populations are fixed, e.g. 
$\v{q} = (p^{(a)}_{1},p^{(a)}_{2},t,1-p^{(a)}_{1}-p^{(a)}_{2}-t)$ and for some $t\in(t_{1},t_{2})$.
Then the entire edge can be written as a set of states 
\begin{align}\label{eq:edge_in_NE_boudnary}
	\left\{\left(p^{(a)}_{1},p^{(a)}_{2},t,1-p^{(a)}_{1}-p^{(a)}_{2}-t\right)\, \big\vert\, t \in [t_{1},t_{2}]\right\} \subset \partial(\mathbbm{NE}).
\end{align}
Yet, it is impossible to satisfy $f(p^{(a)}_{1},p^{(a)}_{2},t,1-p^{(a)}_{1}-p^{(a)}_{2}-t) = 0$ for a finite range $ t \in [t_{1},t_{2}]$, which contradicts Eq.~\eqref{eq:edge_in_NE_boudnary}.
The same can be proven for any two fixed populations. 
Likewise, $[\Bd_{2}(\mathcal{T}_{+}(\v{p}^{(a)}))\setminus\Bd_{1}(\mathcal{T}_{+}(\v{p}^{(a)}))]\cap \partial (\mathbbm{NE}) = \emptyset$: if any point on $[\Bd_{2}(\mathcal{T}_{+}(\v{p}^{(a)}))\setminus\Bd_{1}(\mathcal{T}_{+}(\v{p}^{(a)}))]$ is included in $\partial (\mathbbm{NE})$, at least one entire edge of $\mathcal{T}_{+}(\v{p}^{(a)})$ should also be included, which is not possible. 

Using Proposition~\ref{Lem:extreme_conv_points} twice with $A = \mathcal{T}_{+}(\v{p}^{(a)})$ and $B = \mathbbm{NE}$, we have 
\begin{align}
	\mathcal{T}_{+}(\v{p}^{(a)})\cap\partial(\mathbbm{NE}) = \mathcal{T}_{+}(\v{p}^{(a)})\cap\partial(\mathbbm{NE}) \subset \Bd_{0}(\mathcal{T}_{+}(\v{p}^{(a)})),
\end{align}
i.e. only extreme points of $\mathcal{T}_{+}(\v{p}^{(a)})$ can be included in the set $\partial(\mathbbm{NE})$. 

We finally use Lemma~\ref{Lem:extreme-beta-ordering}.
Since $\v{p}^{(a)}\in\partial(\mathbbm{TE})$, there must exist $\v{q}\in\mathcal{T}_{+}(\v{p}^{(a)})$ with $\beta$-ordering $(2,1,3,4)$ that is in $\partial(\mathbbm{E}) = \partial(\mathbbm{NE})$.
However, only extreme points of $\mathcal{T}_{+}(\v{p}^{(a)})$ can be in $\partial(\mathbbm{NE})$, implying that the $(2,1,3,4)$ extreme point 
\begin{align}
	(1-a)\v{p}^{\star} + a\v{\gamma} \in \partial(\mathbbm{NE}).
\end{align}
From $\v{\gamma}\in\mathrm{int}(\mathbbm{NE})$, we obtain $\v{p}^{\star}\in\mathbbm{E}$, which concludes the proof.

\section{Thermal initial states}\label{A:thermal-state}

In this section, we derive the critical temperatures above or below which the initially thermal state become entanglable. 
We denote the inverse temperature of the system as $\beta_{\ms{S}}$ and that of the environment as $\beta$.
To simplify the notation, we use Boltzmann weights $\Delta_{\ms{S}} \coloneqq e^{-\beta_{\ms{S}}E}$, $\Delta \coloneqq e^{-\beta E}$, , and partition functions $Z_{\ms{S}} \coloneqq (1 + \Delta_{\ms{S}})^{2}$, $Z \coloneqq (1+ \Delta)^{2}$.
Then the initial state can be written as
\begin{align}
	\v{p} = \frac{1}{Z_{\ms{S}}}\left(1,\Delta_{\ms{S}}, \Delta_{\ms{S}}, \Delta_{\ms{S}}^{2}\right).
\end{align}

When $\beta_{S}>\beta$, i.e. when the system is cooler than the environment, the $\beta$-ordering of $\v{p}$ is $(1,2,3,4)$.
The extreme point $\v{p}^{\star}$ of $\mathcal{T}_{+}(\v{p})$, corresponding to the ordering $\v{\pi}^{\star} = (2,1,3,4)$ can be easily obtained by maximally exchanging populations (referred to as the $\beta$-swapping in Refs.~\cite{Lostaglio2018elementarythermal, PRXQuantum.4.040304, Son2024_CETO}) between first two levels $\ket{00}$ and $\ket{01}$,
\begin{align}\label{eq:pstar_thermal_cooler}
	\v{p}^{\star} = \frac{1}{Z_{\ms{S}}}\left(1-\Delta+\Delta_{\ms{S}}, \Delta, \Delta_{\ms{S}}, \Delta_{\ms{S}}^{2} \right).
\end{align}
We want the entanglement measure to be negative, i.e. 
\begin{align}
	f(\v{p}^{\star}) = \frac{1}{Z_{\ms{S}}^{2}}\left(4\Delta_{\ms{S}}^{2}\left[1-(\Delta-\Delta_{\ms{S}})\right] - (\Delta-\Delta_{\ms{S}})^{2}\right)<0.
\end{align}
Solving this inequality, we obtain 
\begin{align}\label{eq:Delta_C1}
	\Delta > \Delta_{C_{1}}\coloneqq \Delta_{\ms{S}}\qty(1+2\sqrt{1+\Delta_{\ms{S}}^{2}} - 2\Delta_{\ms{S}}),  
\end{align}
equivalent to Eq.~\eqref{eq:betaC1_explicit} in Section~\hyperref[Sec:future-thermal-entanglement]{V-B}.

When the system temperature is low, that is, when $\Delta_{\ms S} \ll 1$, Eq.~\eqref{eq:Delta_C1} can be written with the Taylor expansion
\begin{align}
	\Delta_{C_{1}} = 3\Delta_{\ms{S}} + O(\Delta_{\ms{S}}^{2}),
\end{align}
yielding the approximation $\beta_{C_{1}} E \simeq \beta_{\ms S} E-\log3$, up to the leading order.

Now, consider the other case where $\beta_{\ms{S}} < \beta$; the system is now hotter than the surrounding environment, and the $\beta$-order of the initial state $\mathbf{p}$ is now $(4, 3, 2, 1)$. 
Thermomajorization relations reveal that the extreme point $\v{p}^{\star}$ now has two distinct forms depending on the environment temperature $\beta$. 
If $1 > \Delta + \Delta^{2}$, that is, $\beta E \geq -\log [\frac{\sqrt{5}-1}{2}]$, the extreme state is
\begin{align}\label{eq:pstar_thermal_hotter_1}
	\v{p}^{\star} = \frac{1}{Z_{\ms{S}}}\left(1+(\Delta_{\ms{S}} - \Delta)(1+\Delta), \Delta_{\ms{S}}(1+\Delta_{\ms{S}} - \Delta), \Delta, \Delta^{2}\right).
\end{align}
On the other hand, when $\beta E < -\log [\frac{\sqrt{5}-1}{2}]$,
\begin{align}\label{eq:pstar_thermal_hotter_2}
	\v{p}^{\star} = \frac{1}{Z_{\ms{S}}}\left(\frac{\Delta_{\ms{S}}}{\Delta}, \Delta_{\ms{S}}(1+\Delta_{\ms{S}} - \Delta), 1+\Delta_{\ms{S}}+\Delta_{\ms{S}}\Delta-\Delta-\frac{\Delta_{\ms{S}}}{\Delta}, \Delta^{2}\right).
\end{align}
For the latter case Eq.~\eqref{eq:pstar_thermal_hotter_2}, the entanglement measure is evaluated as
\begin{align}
	f(\v{p}^{\star}) = \frac{1}{Z_{\ms{S}}^{2}\Delta^{2}}\left(4\Delta_{\ms{S}}\Delta^{3} - (\Delta_{\ms{S}}-\Delta)^{2}(1+\Delta(\Delta_{\ms{S}}-\Delta))^{2}\right).
\end{align}
However, in the range where Eq.~\eqref{eq:pstar_thermal_hotter_2} is valid, $1\geq\Delta_{\ms{S}}>\Delta>\frac{\sqrt{5}-1}{2}$, entanglement measure is always positive and there is no critical inverse temperature $\beta_{C_{2}}$.
For the former case Eq.~\eqref{eq:pstar_thermal_hotter_1} with $1\geq\Delta_{\ms{S}}>\Delta$ and $\frac{\sqrt{5}-1}{2}\geq\Delta>0$,
\begin{align}\label{eq:fpstar_thermal_hotter_1}
	f(\v{p}^{\star}) = \frac{1}{Z_{\ms{S}}^{2}}\left[-(\Delta_{\ms{S}}-\Delta)^{2}(1+\Delta_{\ms{S}})^{2} + 4\Delta^{2}\left(1+\Delta_{\ms{S}} - (1-\Delta_{\ms{S}})\Delta + \Delta^{2}\right)\right]
\end{align}
and the solution for $f(\v{p}^{\star}) = 0$ can be obtained by solving this quartic equation.
Unfortunately, there is no simple explicit solution for this equation.
To approximate the critical value, let us assume that $\Delta_{\ms{S}}\ll 1$, i.e. $\beta_{\ms{S}} E \gg1$.
Ignoring the $Z_{\ms{S}}$ factor in Eq.~\eqref{eq:fpstar_thermal_hotter_1} and taking only the leading order terms of $\Delta$ and $\Delta_{\ms{S}}$, we get the approximation
\begin{align}
	Z_{\ms{S}}^{2}f(\v{p}^{\star}) = 3\Delta^{2} + 2\Delta\Delta_{\ms{S}} - \Delta_{\ms{S}}^{2} + O(\Delta_{\ms{S}}^{3}),
\end{align}
and the solution of $f(\v{p}^{\star}) = 0$ is approximately $\Delta\simeq \frac{\Delta_{\ms{S}}}{3}$.
In other words, $\beta_{C_{2}} E \simeq \beta_{\ms{S}} E+ \log3$.

\section{Catalytic advantages on entanglement generation - a toy example}\label{App:catalytic-advantages}

We now construct a simple toy example to demonstrate that if a catalyst is allowed, entanglement can be generated from states that cannot achieve this under thermal operations without a catalyst. For simplicity, we assume $\beta = 0$ and consider a bipartite state
\begin{equation}
\rho_{\AB} = \frac{2}{5}\ketbra{00}{00}+ \frac{1}{4}\ketbra{01}{01}+ \frac{33}{100}\ketbra{10}{10}+ \frac{1}{50}\ketbra{11}{11},
\end{equation}
with local Hamiltonians $H_{\ms X} = \ketbra{1}{1}$ for $\ms X \in \{\ms A, \ms B\}$. The initial bipartite state $\rho_{\AB}$ belongs to the thermally non-entanglable set, i.e., $\v p = \mathrm{eig}(\rho) \in \mathbbm{TNE}$. Hence, at $\beta = 0$, there is no thermal operation capable of entangling $\rho_{\AB}$.

Let us now consider a two-dimensional catalyst, prepared in a state
\begin{equation}
     \omega_{\ms{C}}  = \frac{73}{100}\ketbra{0}{0}+\frac{27}{100}\ketbra{1}{1},
\end{equation}
and described described by a Hamiltonian $H_{\ms{C}} = \ketbra{1}{1}$, such that the joint Hamiltonian of the composite system $\ms{ABC}$ is given by
\begin{equation}
    H_{\ms{ABC}} =
\begin{pNiceMatrix}[margin]
\Block[fill=boxcolor!,rounded-corners]{1-1}{} 0 & 0 & 0 & 0 & 0 & 0 & 0 & 0 \\
 0 & \Block[fill=block2c!,rounded-corners]{3-3}{} 1 & 0 & 0 & 0 & 0 & 0 & 0 \\
 0 & 0 & 1 & 0 & 0 & 0 & 0 & 0 \\
 0 & 0 & 0 & 1 & 0 & 0 & 0 & 0 \\
 0 & 0 & 0 & 0 & \Block[fill=block3c!,rounded-corners]{3-3}{} 2 & 0 & 0 & 0 \\
 0 & 0 & 0 & 0 & 0 & 2 & 0 & 0 \\
 0 & 0 & 0 & 0 & 0 & 0 & 2 & 0 \\
 0 & 0 & 0 & 0 & 0 & 0 & 0 & \Block[fill=block4c!,rounded-corners]{1-1}{} 3 \\
\end{pNiceMatrix}.
\end{equation}
where the colors in each block highlight the degenerate energy subspaces. We assume that the total system evolves according to the following energy-preserving unitary transformation
\begin{equation}
U = \begin{pNiceMatrix}[margin]
\Block[fill=boxcolor!,rounded-corners]{1-1}{} 1 & 0 & 0 & 0 & 0 & 0 & 0 & 0 \\
 0 & \Block[fill=block2c!,rounded-corners]{3-3}{} 1 & 0 & 0 & 0 & 0 & 0 & 0 \\
 0 & 0 & 0 & 1 & 0 & 0 & 0 & 0 \\
 0 & 0 & 1 & 0 & 0 & 0 & 0 & 0 \\
 0 & 0 & 0 & 0 & \Block[fill=block3c!,rounded-corners]{3-3}{} 0 & 1 & 0 & 0 \\
 0 & 0 & 0 & 0 & 0 & 0 & 1 & 0 \\
 0 & 0 & 0 & 0 & 1 & 0 & 0 & 0 \\
 0 & 0 & 0 & 0 & 0 & 0 & 0 & \Block[fill=block4c!,rounded-corners]{1-1}{} 1 \\
\end{pNiceMatrix},
\end{equation} 
which maps the composite system to $\sigma_{\ms{ABC}} := U(\rho_{\ms{AB}} \otimes \omega_{\ms{C}})U^{\dagger}$, where the reduced states of $\AB$ become
\begin{align}\label{Eq:finalstate-after-catalyst}
    \sigma_{\AB} = \frac{949}{2000}\ketbra{00}{00}+ \frac{613}{5000}\ketbra{01}{01}+ \frac{771}{2500}\ketbra{10}{10}+ \frac{189}{2000}\ketbra{11}{11}.
\end{align}
while the catalyst returns exactly to the same state as it started $\sigma_{\ms{C}}= \omega_{\ms{C}}$. 

Therefore, one can easily verify that the final state no longer belongs to $\mathbbm{TNE}$, i.e., $\v q := \operatorname{eig}(\sigma) \notin \mathbbm{TNE}$. This implies that a thermal operation capable of mapping $\sigma_{\AB}$ into an entangled state can be constructed.

\end{document}